\newcommand{\mdnoteinline}[1]{\todo[inline, size=\normalsize, color=blue!20]{Mike's Note: #1}}
\title{Smoothed Analysis of Information Spreading in Dynamic Networks}
\author{Michael Dinitz}{Johns Hopkins University, Baltimore, MD, USA}{mdinitz@cs.jhu.edu}{}{Supported in part by NSF award CCF-1909111}
\author{Jeremy Fineman}{Georgetown University, Washington, DC, USA}{jfineman@cs.georgetown.edu}{}{Supported in part by NSF grants CCF-1918989 and CCF-2106759.}
\author{Seth Gilbert}{National University of Singapore, Singapore}{seth.gilbert@comp.nus.edu.sg}{}{Supported in part by Singapore MOE grant MOE2018-T2-1-160.}
\author{Calvin Newport}{Georgetown University, Washington, DC, USA}{cnewport@cs.georgetown.edu}{}{}
\authorrunning{M.\ Dinitz, J.\ Fineman, S.\ Gilbert, and C.\ Newport}
\keywords{Smoothed Analysis, Dynamic networks, Gossip}
\begin{document}
\hideLIPIcs

\maketitle

\begin{abstract}
    The best known solutions for $k$-message broadcast in dynamic networks of size $n$
    require $\Omega(nk)$ rounds.
    In this paper, we see if these bounds can be improved by smoothed analysis.
    To do so, we study perhaps the most natural randomized algorithm for disseminating
    tokens in this setting: at every time step, choose a token to broadcast
    randomly from the set of tokens you know.
    We show that  with even a small amount of smoothing (i.e., one random edge
    added per round), this natural strategy solves $k$-message broadcast
    in $\tilde{O}(n+k^3)$ rounds, with high probability,
    beating the best known bounds for $k=o(\sqrt{n})$ and matching
    the $\Omega(n+k)$ lower bound for static networks for $k=O(n^{1/3})$ (ignoring
    logarithmic factors).
    In fact, the main result we show is even stronger and more general:
    given $\ell$-smoothing (i.e., $\ell$ random edges added per round),
    this simple strategy terminates in $O(kn^{2/3}\log^{1/3}(n)\ell^{-1/3})$ rounds.
    We then prove this analysis close to tight with an almost-matching lower bound.
    To better understand the impact of smoothing on information spreading,
    we next turn our attention to static networks,
    proving a tight bound of $\tilde{O}(k\sqrt{n})$ rounds to solve $k$-message broadcast,
    which is better than what our strategy can achieve in the dynamic setting.
    This confirms the intuition that although smoothed analysis reduces the difficulties
    induced by changing graph structures, it does not eliminate them altogether.
    Finally, we apply tools developed to support our smoothed analysis
    to prove an optimal result for $k$-message broadcast in so-called well-mixed
    networks in the absence of smoothing.  By comparing this result to an existing
    lower bound for well-mixed networks, we establish
      a formal separation between oblivious and strongly adaptive adversaries with respect to well-mixed token spreading,
     partially resolving an open question on the impact of adversary strength on the $k$-message broadcast problem.
\end{abstract}


\section{Introduction}
In this paper, we apply smoothed analysis to the study of $k$-message broadcast in dynamic networks.
We prove that even with a small amount of smoothing, a simple distributed random broadcast strategy can significantly
outperform the existing worst-case lower bounds. 
We then prove that in static networks the complexity of this strategy further improves, 
establishing that even in the context of smoothing, changing topologies remain more difficult to move information through
than their static counterparts.
Finally, we apply the tools developed for these analyses to improve the best-known bounds for $k$-message
broadcast, without smoothing, in the  {\em well-mixed} dynamic network setting. This result is significant in part because
when combined with an existing lower bound on well-mixed networks~\cite{dutta2013complexity},
it provides a formal separation between strongly adaptive and oblivious adversaries for $k$-message broadcast.

\subsection{Background}

In studying distributed network algorithms, it is common to represent the underlying topology with a graph,
where nodes correspond to processes and edges to communication links.
In the {\em dynamic network} setting, these graphs can change from round to round as determined by an adversary.
An upper bound proved in a dynamic network is considered strong as it can tolerate the many sources of interference, failure or congestion
that alter link availability in real world networks (see~\cite{dynamic-overview} for a good review).

Kuhn et~al.~\cite{kuhn2010distributed} sparked recent interest in the study of the {\em $k$-message broadcast problem}, in which nodes
in a network of size $n$ must spread $k$ messages (also called {\em tokens}) to the whole network.
In~\cite{kuhn2010distributed}, the results assume 
the Broadcast CONGEST model in which in each round,
each node can broadcast a single bounded-size message,
containing at most $1$ token.
A primary result in the paper
is a deterministic algorithm that solves $k$-message
broadcast in $O(nk)$ rounds.
For larger values of $k$, this is notably slower than 
the $O(n + k)$ rounds required to solve
this problem in a static network, underscoring the difficulty of dynamic topologies. 

Follow-up work by Dutta et al.~\cite{dutta2013complexity} proved this result close to optimal
with a lower bound that establishes $\Omega(nk/\log{n} + n)$ rounds are necessary to solve
$k$-message broadcast in this setting.
This result is strong in that it holds even for randomized algorithms (with a strongly adaptive adversary), and under the {\em well-mixed} token assumption in which each token has independent constant probability
of starting at each node.

\subsection{Key question: is $\tilde{\Omega}(nk)$ fundamental?}

Given the importance of information dissemination, an $\tilde{\Omega}(nk)$ lower bound
on $k$-message broadcast is unfortunately strong, 
especially for large networks attempting to disseminate large amounts of information in a setting with limited bandwidth.
Following the approach of Dinitz et~al.~\cite{dinitz2018smoothed}, however,
we can investigate whether this bound is fundamental.

In more detail, there are two useful possibilities to consider here.
First, this $\tilde{\Omega}(nk)$ bound might be {\em robust} in the sense that something like $nk$ rounds
to broadcast $k$ messages is a natural consequence of network topologies that change.
This would be reflected, for example, in the existence of large classes of graphs in which
this bound is obviously unavoidable.
The second possibility is that the bound is instead {\em fragile}
in the sense that it requires carefully-crafted pathological
topologies to induce a complexity of this magnitude,
and even small changes to these worst-case graphs enable much more efficient solutions.
These distinctions are important because if the $\tilde{\Omega}(nk)$ lower bound due to~\cite{dutta2013complexity} can be shown to be fragile,
this provides hope that more efficient information dissemination can be expected in most real world settings. By contrast, if the bound is robust, this indicates that efficient communication should not be expected in practice.

One approach to distinguishing between robustness and fragility is to apply smoothed analysis.
In more detail, in the study of sequential algorithms,
Spielman and Teng introduced {\em smoothed analysis}
to help explain why the simplex algorithms works well in practice
despite pessimistic worst-case lower bounds~\cite{SpielmanT04,SpielmanT09}.
They proved that the introduction of small random perturbations
to otherwise worst-case inputs enabled stronger bounds,
indicating the existing lower bound was fragile.

Dinitz et al.~\cite{dinitz2018smoothed}
subsequently adapted smoothed analysis to the study of distributed
algorithms in dynamic networks.
In this framework, as in the worst-case setting,
an adversary generates an arbitrary dynamic graph to describe
the changing network. The individual graphs, however,
are then each augmented 
with $\ell$ additional random edges, for some smoothing parameter $\ell$,
before the distributed algorithm in question is run.

For $\ell=0$, this reduces to the standard worst-case setting where existing lower bounds apply.
For $\ell = \binom{n}{2}$, this reduces (more or less) to a random graph setting,  in which much stronger upper bounds results are typically possible.
As argued in~\cite{dinitz2018smoothed}, if a worst case lower bound is significantly diminished
by smoothed analysis for small $\ell$ values, then this hints that the original bound is fragile.

The processes and problems studied in~\cite{dinitz2018smoothed} were flooding, random walks, and token aggregation. (Follow-up work applied smoothed analysis to the study of the minimum spanning tree~\cite{chatterjee2020distributed} and
leader election~\cite{molla2020smoothed} in static graphs.)
The $k$-message broadcast problem features arguably the best-known pessimistic lower bound in the dynamic network setting, but its examination using smoothed analysis
was left in~\cite{dinitz2018smoothed} as an open problem.

\subsection{Our Results}
We focus in this paper on {\em random broadcast}, one of the simplest possible algorithms
for disseminating tokens: in each round, each node broadcasts a token
chosen uniformly at random from its current token set.  
This simple strategy will enable us to prove a variety of interesting results on $k$-message broadcast.

\subsubsection*{Smoothed analysis of random broadcast in dynamic networks}

Applying smoothed analysis as our key tool, 
we prove that even a small amount of smoothing (i.e., one random edge added per round)
is sufficient to enable random broadcast to outperform the worst-case lower bound.
This implies both that the existing bound is fragile, and that random broadcast
is, in some sense, the {\em right} strategy for spreading tokens through a dynamic network.

We first establish in Section~\ref{sec:worstcase}
the baseline result that with no smoothing random broadcast solves the problem in $O(nk)$ rounds, with high probability in $n$.  This matches the deterministic bound from~\cite{kuhn2010distributed}.\footnote{Note that~\cite{kuhn2010distributed} gave a deterministic algorithm for the problem, and also explored the problem of termination detection,
which further complicates the problem.} 
We then investigate the impact of smoothing. In Section~\ref{sec:Lsmoothing},
we show that even with  a small amount of smoothing (i.e., $\ell = 1$), 
random broadcast now terminates in $\tilde{O}(n + k^3)$ rounds, with high probability in $n$,
improving on the best-known $O(nk)$ bound for any $k = \tilde{o}(\sqrt{n})$,
and matching the static network lower bound of $\Omega(n+k)$ for $k=\tilde{O}(n^{1/3})$.
 
We emphasize that $1$-smoothing adds at most \emph{one} new edge to the network in each round,
which enables at most one extra token dissemination. This smoothing therefore changes the overall bandwidth or connectivity of the graph by only a very small amount.\footnote{Notice, for example, that to significantly increase the conductance or vertex expansion of the graph, you would need to add many more edges.}
Given that $\Theta(nk)$ rounds might be necessary to solve $k$-message broadcast,
our speed-up in time complexity in this context
does not come simply from adding large amounts of extra capacity to a worst-case network: most of the work of token dissemination must still occur over the adversarially specified edges in the network.
The smoothing   accomplishes something more subtle: as
we elaborate in our below discussion of predecessor paths,
these extra edges are not eliminating bottlenecks in the underlying dynamic network, 
but instead providing just enough random noise to allow us to bypass their corresponding potential for congestion.

In reality, our result for $\ell=1$ is a special case of our more general result,
showing that random broadcast solves $k$-message broadcast in $O(\frac{kn^{2/3}\log^{1/3}{n}}{\ell^{1/3}})$ rounds,
which for $\ell=1$ is upper bounded by the $\tilde{O}(n+k^3)$ bound claimed above for all $n$ and $k$.
Notice that in this general form, for $k=o(n^{1/3})$,
random broadcast actually {\em beats} the $\Omega(n+k)$ lower bound for static networks.
This is possible because even a small number of additional random edges
enables tokens to not only bypass bottlenecks,
but also skip ahead in temporal paths, reducing the effective dynamic diameter of the network.
As we increase $\ell$, we get further improvements. 
For $\ell=k^3$, for example,
we get a sub-linear result,
as the increased smoothing both speeds up the rate at which tokens
initially spread, and the rate at which they subsequently
jump over smoothed edges to locations near their destinations in time and space (see below for elaboration).

\subparagraph*{Predecessor paths.} A key technique in our analysis,
presented in Section~\ref{sec:pred},
is the use of graph structures that we call {\em predecessor paths},
which capture paths that exist over time.
They are represented as a sequence of node/round pairs,
$(u_1,r_1),(u_2,r_2), \ldots, (u_x, r_x)$,
and for each $(u_i, r_i)$, for $i< x$,
it is guaranteed that $u_i$ is connected to $u_{i+1}$
during round $r_i$.
We further customize these paths for a given token $t$,
strengthening the guarantee for each $(u_i,r_1)$ such that not only will 
$u_i$ be connected to $u_{i+1}$ in round $r_i$,
but it will broadcast token $t$ in this round, if it knows it.

For each given destination $u_x$ and token $t$,
therefore, we can reduce the problem of delivering $t$ to $u_x$ to the problem
of seeding token $t$ into the appropriate predecessor path. 
(Intuitively, we are establishing here a net over time and space that can capture
a token and then inexorably guide it to the center of the trap.)
At a high-level, we can therefore break our smoothed analysis of random broadcast
into three phases. During the first phase, we ignore the smoothed edges,
and allow the natural dynamics of information spread in these networks
spread out each token to a larger set.
During the second phase, we allow the smoothed edges to seed these
tokens onto the appropriate predecessor paths.
During the final phase, the tokens can then traverse these paths to their final destinations.

The lengths of these phases are inter-dependent. 
Increasing the initial spreading phase, for example,
decreases the second phase as now each smoothed edge has a higher probability of selecting
a node with a useful token.
Similarly,
relying on long predecessor paths also reduces the second phase,
as now each smoothed edge has more targets to which to deliver a token.
Longer predecessor paths, however, necessitate a longer third phase to given tokens time to traverse to their destinations.
Our final result balances these dependencies by optimizing the time complexity
when we fix all three phases to be the same length.

\subparagraph*{Lower bound.}  In Section~\ref{sec:lower}, we complement our upper bound analysis of random broadcast
with a nearly-matching lower bound.
In more detail
we describe and analyze a {\em dynamic star} topology
in which the network graph forms a star in each round, but the identity of the center node rotates over time.
In this setting, we prove random broadcast's expected time to solve $k$-message
broadcast is in $\tilde \Omega\left( \min\left( \frac{kn^{2/3}}{(\ell(k+\ell))^{1/3}}, \frac{kn}{k+\ell}\right)\right)$. 
This result is approximately a factor of $(k+\ell)^{1/3}$ below our upper bound analysis,
confirming that significantly more efficient analyses are not possible.
Notably, this bound establishes the fundamental nature of the drop from $n$ to $n^{2/3}$ in the presence
of even a small amount of smoothing.

\subsubsection*{Smoothed analysis of random broadcast in static networks}

A possible interpretation of our upper bounds is the following:
``with a small amount of smoothing, dynamic networks behave like static networks''.  In other words, it might be the case that smoothing removes the differences between dynamic and static networks.  
In Section~\ref{sec:static}, we investigate this issue by studying the behavior of random broadcast in the network toplogies that do not change from
round to round.
We prove that in the presence of minimum smoothing (i.e., $\ell=1$)
random broadcast completes in static network a polynomial factor of $n$ faster than what is possible in dynamic networks.
Formally, we prove that in any static network with $1$-smoothing random broadcast completes in $\tilde{O}(k\sqrt{n})$ rounds, with high probability.
(Recall, the relevant upper bound result in dynamic networks for $1$-smoothing is $\tilde{O}(kn^{2/3})$ rounds.)
We then prove this analysis is tight (within logarithmic) terms with a matching lower bound.

At the core of our analysis is a  decomposition of an arbitrary static network into at most $O(\sqrt{n})$ components 
each with diameter at most $O(\sqrt{n})$.
We demonstrate that given a collection of $t$ components that know the token,
in a single spreading interval of $\tilde{O}(k\sqrt{n})$ rounds, each of the $t$ components is likely to send
a given target token over a smoothed edge to a unique new component, effectively doubling the number that now know it.
We leverage this doubling behavior to spread a token to a large fraction of the network in only a logarithmic number of spreading intervals.

\subsubsection*{Well-mixed networks}

In~\cite{dutta2013complexity}, the authors introduced the notion of a {\em well-mixed} network in
the context of studying $k$-message broadcast.
They call a network well-mixed if for every node $u$ and token $t$,
node $u$ starts with token $t$ with some independent constant probability.
Surprisingly, they observe that their $\Omega(nk)$ lower bound holds even for well-mixed networks. It turns out that even 
starting with a very uniform token distribution does not make the problem easy. 

Accordingly, they replace the broadcast communication model with the much more powerful
Symmetric-Diff CONGEST
model in which each node can not only send a different token on each outgoing edge,
but also perform a set-difference with each of their neighbors before deciding what tokens to send.
Given this extra power, they show that it is possible to solve $k$-message broadcast
in a well-mixed network
in $\tilde{O}(n+k)$ rounds, with high probability.

Leveraging our predecessor path constructions introduced for our smoothed analysis results,
we prove, perhaps surprisingly, that our simple random broadcast algorithm solves $k$-message
broadcast in $O((k/p)\log{n})$ rounds, with high probability, where $p$ is the probability that
each nodes starts with each token.
For the $p=\Theta(1)$ case considered in~\cite{dutta2013complexity}, 
we strictly improve on the bound they achieved in their more powerful communication model.
Indeed, for constant $p$, our bound is within a single log factor of matching a trivial
$\Omega(k)$ lower bound for {\em all} algorithms in the broadcast communication model.

The key follow-up question, of course, is why our result does not violate the $\Omega(nk)$ lower bound
from~\cite{dutta2013complexity}. The difference is found in the adversary assumptions.
The existing lower bound requires a {\em strongly adaptive} adversary that knows the nodes' random
choices in advance and can construct the network topology for a given round based
on the knowledge of the tokens nodes are broadcasting in that round.
We assume, by contrast, an oblivious adversary that designs the network without advance knowledge of these random bits.

Our well-mixed network result, therefore, opens a clear gap between the strongly adaptive and oblivious adversaries
in the context of $k$-message broadcast.
This partially resolves the open question presented in~\cite{dutta2013complexity}
as to whether or not the $\Omega(nk)$ lower bound applies to oblivious adversaries as well.

\section{Related Work}
\label{sec:related}

Many problems have been studied in various dynamic network models;
e.g.,~\cite{kuhn:2011,haeupler:2011,dutta2013complexity,clementi:2012,augustine:2012,denysyuk:2014,newport:2014,ghaffari:2013} (see~\cite{dynamic-overview} for a good survey).
Interest in $k$-message broadcast in a dynamic network with broadcast
communication was sparked by Kuhn et al.~\cite{kuhn:2010},
who established the original $O(nk)$ upper bound that provides the baseline
 for the smoothed analysis deployed in this paper.
The relevant matching lower bounds for arbitrary and well-mixed token
distributions were subsequently proved by Dutta et~al.~\cite{dutta2013complexity}.

Dinitz et al.~\cite{dinitz2018smoothed} adapted the smoothed analysis
technique, originally introduced by Spielman and Teng~\cite{SpielmanT04,SpielmanT09}
in the context of sequential algorithms, to dynamic networks.
They studied flooding, random walks and aggregation,
and identified $k$-message broadcast as an important open question.
Subsequent work applied this smoothed analysis framework to various
other graph problems, including 
minimum spanning tree construction~\cite{chatterjee2020distributed}
and leader election~\cite{molla2020smoothed}.
Recently, Meir et al.~\cite{MPS20} proposed a variation of graph
smoothing, suitable for long-lived processes, in which the smoothing parameter
$\ell$ can be fractional.
As noted in~\cite{dinitz2018smoothed},
smoothed analysis is not the only technique deployed
in the literature for sidestepping fragile dynamic
network lower bounds.
Denysyuk et al.~\cite{denysyuk:2014}, for example,
circumvent an exponential lower bound for random walks in dynamic
graphs due to~\cite{avin:2008}
by requiring the dynamic graph to include a certain number of static graphs from a well-defined set.
In the context of the dynamic radio network model,
Ghaffari et~al.~\cite{ghaffari:2013} studied the impact of adversary strength,
similarly finding a noticeable gap between oblivious and strongly adaptive
adversaries in the context of broadcast.

\section{Preliminaries}
\label{sec:prelim}
Here we define the dynamic network models we study and the $k$-message problem we solve.
We also formalize $\ell$-smoothing and establish some useful notation and
probability results leverage throughout the paper to follow.

\paragraph*{Model} 
We study a dynamic network model
in which an execution begins with an oblivious adversary that chooses a {\em dynamic graph,},
defined as a sequence ${\mathcal G} = G_1, G_2, \ldots$, where each $G_i$ is a connected graph over a common
node set $V$ of size $n=|V|$.
Time proceeds in synchronous rounds.
At the beginning of each round $r\geq 1$, each node $u\in V$ can reliably
broadcast a message to is neighbors in $G_r$. A key difficulty of these models is that $u$ does not
know its neighbors in advance.

\paragraph*{$k$-Message Broadcast} 
The $k$-message broadcast problem assumes a set $T$ containing $k\geq 1$ unique messages
that are also commonly called {\em tokens} or {\em rumors}.
Each rumor in $T$ starts the execution at one or more nodes.
The problem is solved once all nodes have received all $k$ rumors in $T$.
Following the standard convention~\cite{kuhn2010distributed},
we assume each node can broadcast at most $1$ rumor per round.

\paragraph*{$\ell$-Smoothing} 
Fix a dynamic graph ${\mathcal G}=G_1,G_2,\ldots$. 
Fix a smoothing parameter $\ell\geq 1$.
Our goal is to define a smoothing process that adds $\ell$ random edges
to each $G_i$ in ${\mathcal G}$.
In an effort
to maximize generality, 
the original definition of $\ell$-smoothing from~\cite{dinitz2018smoothed}
assumed that each $G_i$ was replaced by a graph $\hat G_i$ sampled
uniformly from the set of graphs that are both ``allowable'' and within edit distance $k$ of $G_i$.
This was meant to allow both additions and deletions while avoiding
illegal topologies (e.g., disconnected graphs). 

The results for the Broadcast CONGEST model in~\cite{dinitz2018smoothed},
however, largely avoided much of this generality,
instead applying results (Lemmas 4.1 and 4.2) that establish
that this model approximates a simpler model in which edges
are randomly added from the set of all edges.
For the sake of clarity, in this paper we directly
deploy this simpler definition of smoothing.

Formally, after the adversary generates ${\mathcal G}$,
we smooth each $G_i$ as follows: (1) randomly generate $\ell$ edges (with replacement);
(2) for each such edge, if it is not already in the smoothed graph we are generating,  add it to the graph.

\paragraph*{Notation}
In the following, we use $\tilde{O}$, $\tilde{\Theta}$, and $\tilde{\Omega}$
to suppress logarithmic factors with respect to $n$.
When we specify a result holds \emph{with high probability},
we mean with failure probability upper bounded by $n^{-x}$
for some sufficiently large constant $x\geq 1$.
We also use $[x]$, for integer $x\geq 1$, to indicate the set $\{1,2,\ldots,x\}$.

Fix an execution of a $k$-message broadcast algorithm for some token set $T$
of size $k$
and node set $V$.
For each node $u\in V$ and round $r\geq 1$,
let $T_u(r)$ be the set of tokens $u$ started with or received by
the beginning of round $r$.
We say $u$ {\em knows} the tokens in $T_u(r)$ at the beginning of round $r$.
Finally, for a given token $t\in T$,
let $n_t(r) = |\{ u \mid t\in T_u(r) \}|$ be the number of nodes
that know token $t$ at the beginning of $r$.

\paragraph*{Useful Probability Results}
Many of our high probability results that follow leverage the following
useful form of a Chernoff Bound:

\begin{theorem}
\label{thm:chernoff}

Let $X_1,\ldots,X_j$ be a series of independent random variables such that $X_i\in[0,1]$ where  $X =\sum_{i=1}^j X_i$ has expectation $E[X]=\mu$. For $\varepsilon\in [0,1]$, $\Pr[X\leq(1-\varepsilon)\cdot\mu]\leq\exp(-(1/2)\cdot\varepsilon^2\mu)$.

\end{theorem}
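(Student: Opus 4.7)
The plan is to follow the standard exponential moment (Chernoff) method, adapted to the lower tail. First I would introduce a parameter $t>0$ and observe that $\Pr[X\le(1-\varepsilon)\mu]=\Pr[e^{-tX}\ge e^{-t(1-\varepsilon)\mu}]$, so Markov's inequality yields $\Pr[X\le(1-\varepsilon)\mu]\le e^{t(1-\varepsilon)\mu}\cdot E[e^{-tX}]$. By independence of the $X_i$, the moment generating function factors: $E[e^{-tX}]=\prod_{i=1}^{j}E[e^{-tX_i}]$.

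Next I would bound each factor. Because the map $x\mapsto e^{-tx}$ is convex on $[0,1]$, it lies below its secant, giving $e^{-tX_i}\le 1-(1-e^{-t})X_i$ pointwise. Taking expectations and writing $p_i=E[X_i]$ yields $E[e^{-tX_i}]\le 1-(1-e^{-t})p_i\le \exp(-(1-e^{-t})p_i)$ via the standard inequality $1+y\le e^{y}$. Multiplying over $i$ and using $\sum p_i=\mu$ gives $E[e^{-tX}]\le \exp(-(1-e^{-t})\mu)$, so
\[
\Pr[X\le(1-\varepsilon)\mu]\le \exp\bigl(-(1-e^{-t})\mu+t(1-\varepsilon)\mu\bigr).
\]

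I would then optimize by choosing $t=-\ln(1-\varepsilon)$, which is nonnegative for $\varepsilon\in[0,1]$ and makes $1-e^{-t}=\varepsilon$. The exponent simplifies to $-\varepsilon\mu-(1-\varepsilon)\mu\ln(1-\varepsilon)$, so we obtain the classic raw bound $\Pr[X\le(1-\varepsilon)\mu]\le\bigl(e^{-\varepsilon}/(1-\varepsilon)^{1-\varepsilon}\bigr)^{\mu}$. Finally I would convert this to the stated cleaner form by showing the calculus inequality $(1-\varepsilon)\ln(1-\varepsilon)\ge -\varepsilon+\varepsilon^{2}/2$ for $\varepsilon\in[0,1]$. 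A one-line derivative check suffices: letting $f(\varepsilon)=(1-\varepsilon)\ln(1-\varepsilon)+\varepsilon-\varepsilon^2/2$, we have $f(0)=0$ and $f'(\varepsilon)=-\ln(1-\varepsilon)-\varepsilon\ge 0$, so $f\ge 0$ on $[0,1]$. Substituting yields the exponent bound $-\varepsilon^{2}\mu/2$ and finishes the proof.

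The main obstacle is essentially bookkeeping rather than a conceptual hurdle: the convexity bound on the MGF and the optimal choice of $t$ are forced, and the only mildly nonobvious step is the final calculus inequality that turns $-\varepsilon-(1-\varepsilon)\ln(1-\varepsilon)$ into the clean Gaussian-looking exponent $-\varepsilon^{2}/2$. One caveat I would watch is the boundary case $\varepsilon=1$, where $\ln(1-\varepsilon)$ diverges; here the bound $\Pr[X=0]\le e^{-\mu/2}$ still holds by direct calculation from the factored MGF with $t\to\infty$, so the statement remains valid on the full interval $[0,1]$.
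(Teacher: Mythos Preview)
Your proof is correct and is the standard exponential-moment derivation of the lower-tail Chernoff bound for sums of $[0,1]$-valued independent random variables. Note, however, that the paper does not actually prove Theorem~\ref{thm:chernoff}: it is stated in the preliminaries as a well-known concentration result and invoked as a black box throughout (in particular, inside the proof of Lemma~\ref{lem:stochasticdominance} and various spreading arguments). So there is no paper proof to compare against; you have simply supplied a complete and correct argument where the paper appeals to the literature.
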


In several places in our analysis, we tame correlated random variables 
by applying the following stochastic dominance result,
which generalizes the above concentration bound.  It says that if the probability that the $i$'th variable is $1$ is at least $p$ no matter how the first $i-1$ variables are realized, then we can assume that we have independent Bernoulli variables with parameter $p$.  

\begin{lemma} \label{lem:stochasticdominance}
Let $X_1,\ldots,X_j$ be $j$ random variables (not necessarily independent), each of which is distributed over $\{0,1\}$.  Suppose there is some $p \in [0,1]$ such that for all $i \in [j]$ and for all $x_1, x_2, \dots, x_{i-1} \in \{0,1\}$,
\[
\Pr\left[X_i = 1 \mid X_k = x_k\ \forall 1 \leq k < i\right] \geq p.
\]
Then
\[
\Pr\left[\sum_{i=1}^j X_i \leq (1-\varepsilon)pj \right] \leq \exp(-(1/2)\cdot\varepsilon^2 p j)
\]
\end{lemma}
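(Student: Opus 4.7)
The plan is to reduce Lemma~\ref{lem:stochasticdominance} to the standard Chernoff bound (Theorem~\ref{thm:chernoff}) via a coupling argument. Concretely, I will construct, on an enlarged probability space, a sequence of independent Bernoulli($p$) random variables $Y_1,\dots,Y_j$ such that $X_i \geq Y_i$ almost surely for every $i$. Once this is in hand, $\sum_{i=1}^j X_i \geq \sum_{i=1}^j Y_i$ pointwise, so any lower-tail bound for $\sum Y_i$ transfers to $\sum X_i$, and the conclusion follows by applying Theorem~\ref{thm:chernoff} to the $Y_i$'s (which are genuinely independent and $[0,1]$-valued).

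To build the coupling, I would proceed inductively on $i$. Introduce an i.i.d.\ sequence $U_1,\dots,U_j$ of $\mathrm{Uniform}[0,1]$ random variables, independent of everything else. For $i=1,\dots,j$, having already defined $X_1,\dots,X_{i-1}$, let
\[
q_i(x_1,\dots,x_{i-1}) \;=\; \Pr\!\left[X_i = 1 \,\middle|\, X_k = x_k\ \forall\,1\leq k < i\right],
\]
which by hypothesis satisfies $q_i \geq p$. Redefine $X_i := \mathbf{1}\!\left[U_i \leq q_i(X_1,\dots,X_{i-1})\right]$ and $Y_i := \mathbf{1}\!\left[U_i \leq p\right]$. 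A short check confirms that the joint law of $(X_1,\dots,X_j)$ is unchanged, while the $Y_i$'s are independent Bernoulli($p$) because each depends only on its own $U_i$ and on the constant threshold $p$. Since $q_i \geq p$, whenever $U_i \leq p$ we also have $U_i \leq q_i$, giving $X_i \geq Y_i$ deterministically.

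With the coupling established, the remainder is immediate: $\sum_{i=1}^j X_i \geq \sum_{i=1}^j Y_i$, and therefore
\[
\Pr\!\left[\sum_{i=1}^j X_i \leq (1-\varepsilon)pj\right] \;\leq\; \Pr\!\left[\sum_{i=1}^j Y_i \leq (1-\varepsilon)pj\right] \;\leq\; \exp\!\left(-\tfrac{1}{2}\varepsilon^2 pj\right),
\]
where the last inequality is Theorem~\ref{thm:chernoff} applied to $Y_1,\dots,Y_j$ with $\mu = pj$.

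The main (and only real) obstacle is making the coupling construction rigorous in the face of the fact that the $X_i$'s are not independent and their conditional distributions depend on past realizations; the delicate point is verifying that replacing the original $X_i$ by $\mathbf{1}[U_i \leq q_i(X_1,\dots,X_{i-1})]$ preserves the joint distribution of $(X_1,\dots,X_j)$. This can be handled in a line by noting that, conditional on $(X_1,\dots,X_{i-1})$, the indicator $\mathbf{1}[U_i \leq q_i(X_1,\dots,X_{i-1})]$ is Bernoulli with exactly the correct conditional probability $q_i$, so the construction reproduces the original sequential conditional distributions and hence the original joint law.
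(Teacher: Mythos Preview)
Your proof is correct and follows essentially the same approach as the paper: both construct a coupling in which i.i.d.\ Bernoulli($p$) variables $Y_i$ satisfy $Y_i \leq X_i$ almost surely, then apply Theorem~\ref{thm:chernoff} to the $Y_i$'s. The only cosmetic difference is the coupling mechanism---you use a single uniform threshold $U_i$ with $Y_i=\mathbf{1}[U_i\le p]$ and $X_i=\mathbf{1}[U_i\le q_i]$, whereas the paper first samples $Y_i$ and then, if $Y_i=0$, flips an extra coin with bias $(q_i-p)/(1-p)$---but these are equivalent realizations of the same stochastic-domination idea.
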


\begin{proof}
Consider the following process for sampling random variables $\hat X_1, \dots, \hat X_j$.  For $i = 1$ to $j$, do the following.  Let $x_1, \dots, x_{i-1} \in \{0,1\}$ be the values of $\hat X_1, \dots, \hat X_{i-1}$ respectively, and let $q = \Pr[X_i = 1 \mid X_k = x_k\ \forall 1 \leq k < i]$.  Note that by the assumption of the lemma, $q \geq p$.  Sample an independent Bernoulli random variable $Y_i$ which is $1$ with probability $p$ and is $0$ otherwise.  If $Y_i = 1$ then set $\hat X_i = 1$.  If $Y_i = 0$, then set $\hat X_i = 1$ with probability $(q - p)/(1-p)$ and otherwise set $\hat X_i = 0$.

It is easy to see that $\hat X_1, \dots, \hat X_j$  and $X_1, \dots, X_j$ have  identical joint distributions.  More formally, let $x_i \in \{0,1\}$ for all $i \in [j]$.  Then
\begin{align*}
    \Pr[\hat X_i = x_i\ \forall i \in [j]] &= \prod_{i=1}^j \Pr[ \hat X_i = x_i \mid \hat X_k = x_k\ \forall k < i] \\
    &= \prod_{i=1}^j \begin{cases}p + (1-p)\frac{\Pr[X_i = 1 \mid X_k = x_k\ \forall 1 \leq k < i] - p}{1-p} & \text{if } x_i = 1\\ (1-p) \frac{1 - \Pr[X_i = 1 \mid X_k = x_k\ \forall 1 \leq k < i]}{1-p} & \text{if } x_i = 0\end{cases} \\
    &= \prod_{i=1}^j \Pr[X_i = x_i \mid X_k = x_k\ \forall 1 \leq k < i] \\
    &= \Pr[X_i = x_i\ \forall i \in [j]]
\end{align*}

By the definition of $\hat X_i$, we also have the property that $Y_i \leq \hat X_i$ for all $i$.  Hence Theorem~\ref{thm:chernoff} implies that 
\begin{align*}
    \Pr\left[\sum_{i=1}^j X_i \leq (1-\varepsilon)pj \right] &= \Pr\left[\sum_{i=1}^j \hat X_i \leq (1-\varepsilon)pj \right] \leq \Pr\left[\sum_{i=1}^j Y_i \leq (1-\varepsilon)pj \right] \leq \exp(-(1/2)\cdot\varepsilon^2 p j)
\end{align*}
as claimed.
\end{proof}


\section{Random Broadcast Predecessor Paths}
\label{sec:pred}

Several of the results that follow are built on a structure  that we call {\em predecessor paths},
which are defined with respect to both a given dynamic graph ${\mathcal G}$ and the collection
of random bits that determine the choices during a given execution of the random broadcast algorithm.
We define and analyze these structures in a general way here. We will later deploy
these results to prove specific bounds on random broadcast.

To formally define a predecessor path, we first introduce the notion of a bit assignment ${\mathcal B}$
to be a function ${\mathcal B}:V \times \mathbb{Z}_{>0} \rightarrow \{0,1\}^*$,
where ${\mathcal B}(u,r)$ are the random bits node $u$ uses to make its choice of which
token to broadcast in round $r$ of running random broadcast.
Notice, the combination of a dynamic graph ${\mathcal G}$ and bit assignment ${\mathcal B}$, does not by itself 
fully specify an execution of random broadcast,
as knowledge of the initial token assignment is also required.
This information, however, is sufficient for our formal definition.\footnote{A brief
aside is that although we call these objects {\em paths}, the definition
actually captures a more general in-tree structure in the time expansion graph.}

\begin{definition}
Fix a dynamic graph ${\mathcal G}$ defined over node set $V$, token set $T$,
target node $u\in V$, target token $t\in T$, round pair $r$,$r'$,
with $1 \leq r < r'$, and bit assignment ${\mathcal B}$.
A {\em predecessor path} $P_{u,t}(r,r')$ for these parameters is a node/round
sequence $(u_1,r_1),(u_2,r_2),\ldots,(u_h,r_h)$, 
where $r \leq r_1 < r_2 < \ldots < r_h \leq r'$, 
and $u_i \neq u_j$ for $i,j \in [h]$, $i\neq j$, that
satisfies the following
 with respect to the execution of random broadcast in ${\mathcal G}$ according to bit assignment ${\mathcal B}$:

\begin{enumerate}
    \item For each $i\in [h-1]$: if $t\in T_{u_i}(r_i)$, 
    then $u_i$ will broadcast $t$ in round $r$
    and it will be received by at least one node $u_j$, for $j>i$.
    
    \item If $t\in T_{u_h}(r_h)$, then $u_h$ will broadcasts $t$ during round $r_h$ and it will
    be received by node $u$.
\end{enumerate}
\end{definition}

A natural corollary of this definition is that if {\em any} node $u_i$
in a predecessor path $P_{u,t}(r,r')$ learns $t$ by $r_i$,
then $u$ will learn $t$ by $r'$.

Our goal is to describe and analyze a procedure for generating
a predecessor path for a given set of parameters.
We will then analyze the expected length of the paths created.
Roughly speaking, when studying random broadcast,
the longer a predecessor path the better, as it gives more opportunities for a
given token to arrive at a node that can then send it on its way to the desired
destination.

\paragraph*{Preliminaries}
As discussed, we will be analyzing the simple algorithm in which every node broadcasts a token that it knows uniformly at random.  To ease the analysis, though, we assume without loss of generality that (1) the tokens are labelled from $1$ to $k$ and that nodes know $k$;
and (2) that a node randomly selects a token to send in a given round by randomly permuting
the values from $1$ to $k$ and then broadcasting the first token from this sequence that it possesses.  Note that this gives the exact same process as the randomized algorithm we care about, which is why this is without loss of generality. It allows us, however, to fix the random process for how a token is chosen even when the available set of tokens to send is unknown.

For a given node set $V$, token set $T$,  bit assignment ${\mathcal B}$,
and round $r\geq 1$,
let the {\em primary token} for $u$ in $r$, indicated $\delta_u(r)$, 
be the first token id in $u$'s random permutation for this round as determined 
by ${\mathcal B}$. In the practical setting where $u$ permutes all values from $1$ to $\hat k$,
then  the primary token is the first value in this permutation that correspond to an
actual token in $T$.
The important property of a primary token
is if $\delta_u(r) = t$,
then we know that if $t\in T_u(r)$, node $u$ will broadcast $t$ in this round.

Given a dynamic graph ${\mathcal G} = G_1,G_2, \ldots$,
a non-empty node subset $S\subset V$, and a round $r\geq 1$,
we further define the {\em predecessor cut} $c(S,r)$ to be
the set of nodes in $V \setminus S$ that neighbor nodes in $S$ in $G_r$.
Notice, because each graph is connected and $S$ is a proper subset of $V$,
these cut partitions are always non-empty.

\paragraph*{Predecessor Path Construction}
We now describe how to construct a predecessor path for a given set of parameters.
This construction process works backwards in time from the end of the desired
interval to the beginning.
While we describe this process algorithmically, we emphasize that this algorithmic construction is used only in the \emph{analysis} of our algorithms. 

In the following,
we assume a fixed dynamic network ${\mathcal G} = G_1,G_2,\ldots$, 
defined over some node set $V$ of size at least $2$,
a token set $T$ of size $k\geq 1$,
and a fixed bit assignment ${\mathcal B}$ for the nodes in $V$ to run random broadcast.
We then parameterize the construction process with a node $u\in V$,
token $t\in T$, and round range $1\leq r < r'$.
It returns a predecessor path $P_{u,t}(r,r')$ for these parameters.

\begin{algorithm}[H]
\SetAlgoLined
 $P_{u,t}(r,r') \gets \epsilon$\;
 $i \gets r'$\;
 $S\gets \{u\}$\;
 
 \While{$i > r$}{
    $S_{i-1} \gets c(S, (i-1))$\;
    $S_{i-1}^{(t)} \gets \{ v \mid v\in S_{i-1} \wedge \delta_v(i-1) = t\}$\;
    \If{$|S_{i-1}^{(t)}|>0$}
    {
    fix any $v$ in $S_{i-1}^{(t)}$\;
    $S \gets S \cup \{v\}$\;
    append $(v,i-1)$ to the front of $P_{u,t}(r,r')$\;
    }
    $i \gets i - 1$\;

 }
 
\Return{$P_{u,t}(r,r')$}
 \caption{Path-Construction($u,t,r,r')$}
\end{algorithm}

We now analyze the paths constructed by this procedure, showing establishing the relationship between interval length ($r'-r$) and path length.

\begin{theorem}
Fix a dynamic graph ${\mathcal G}$ defined over node set $V$ of size $n>1$,
token set $T$ of size $k\geq 1$, 
random bit assignment ${\mathcal B}$ for $V$,
and error exponent integer $x>0$.
For every $u\in V$, $t\in T$, and rounds $r,r'$ where $r'-r = z\geq 8xk\ln{n}$:
\begin{enumerate}
    \item The sequence $P_{u,t}(r,r')$ produced by {\em Path-Construction} is a predecessor path.
    \item With probability at least $1-n^{-x}$: $|P_{u,t}(r,r')| > \frac{z}{2k}$.
\end{enumerate}

%
%
%
\label{thm:pathlength}
\end{theorem}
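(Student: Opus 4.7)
My plan is to handle the two claims separately. Part 1 is an almost mechanical verification that the construction respects the definition of a predecessor path; the substance of the theorem is Part 2, which I would prove by a stochastic-dominance argument showing that each iteration of the while loop successfully appends a node with probability at least $1/k$, conditional on the entire history.

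For Part 1, I would walk through the construction and verify both conditions of the definition directly. Order the nodes $u_1,\dots,u_h$ by increasing round index so that $u_h$ is the first node appended (processed with $i=r'$, when $S=\{u\}$). At that step $u_h$ is chosen from $c(\{u\},r_h)$ with $\delta_{u_h}(r_h)=t$, so if $t\in T_{u_h}(r_h)$ then $u_h$ broadcasts $t$ in round $r_h$ and $u$ is a neighbor in $G_{r_h}$; this gives condition 2. For $i<h$, when $u_i$ was added $S$ already contained $u$ and every $u_j$ with $j>i$, and $u_i\in c(S,r_i)$, so $u_i$ has some neighbor $u_j\in S$ with $j>i$ in $G_{r_i}$; combined with $\delta_{u_i}(r_i)=t$, this is exactly condition 1. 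Strict monotonicity of rounds and distinctness of the $u_i$ follow from the fact that $i$ strictly decreases each iteration and each added node is drawn from $V\setminus S$.

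For Part 2, the key observation is that $\delta_v(i-1)$ is the first element of a uniformly random permutation (over the token labels) and hence equals $t$ with probability exactly $1/k$, independently across $(v,i-1)$ pairs. Fix any iteration $i$ and any realization of the algorithm's prior choices. This realization determines $S$ and hence $S_{i-1}=c(S,i-1)$, but uses only the bits $\mathcal{B}(\cdot,j-1)$ for $j>i$, not the bits $\mathcal{B}(\cdot,i-1)$. Assuming $S\neq V$ (so $S_{i-1}\neq\emptyset$ by connectivity of $G_{i-1}$), the conditional probability that at least one $v\in S_{i-1}$ has $\delta_v(i-1)=t$ is at least $1-(1-1/k)^{|S_{i-1}|}\geq 1/k$. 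Let $Y_i$ indicate success at iteration $i$; there are $z$ iterations, and the above gives $\Pr[Y_i=1\mid Y_j=y_j,\ j>i]\geq 1/k$ in every history. Applying Lemma~\ref{lem:stochasticdominance} with $p=1/k$ and $\varepsilon=1/2$ yields
\[
\Pr\!\left[|P_{u,t}(r,r')|\leq \tfrac{z}{2k}\right]\leq \exp\!\left(-\tfrac{z}{8k}\right)\leq n^{-x},
\]
using $z\geq 8xk\ln n$ in the last step.

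The one subtlety I would be careful about is the hypothesis $S\neq V$ needed for the $1/k$ lower bound to hold: if $S$ ever fills all of $V$, the cut becomes empty and further iterations contribute zero. I would handle this by observing that the event $|S|=n$ already forces $|P_{u,t}(r,r')|\geq n-1$, which exceeds $z/(2k)$ whenever the bound is nontrivial; formally, one can couple $Y_i$ to an auxiliary Bernoulli$(1/k)$ variable $\hat Y_i$ whenever $S=V$ so that the hypothesis of Lemma~\ref{lem:stochasticdominance} holds unconditionally and $\sum Y_i\geq\min(\sum\hat Y_i,n-1)$. Beyond this bookkeeping, the main conceptual point is the freshness-of-bits argument: each iteration queries $\mathcal{B}(\cdot,i-1)$, and distinct iterations touch distinct rounds, so there is no hidden reuse of randomness and the stochastic dominance hypothesis holds cleanly.
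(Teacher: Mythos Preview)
Your proposal is correct and follows essentially the same approach as the paper's proof: verify Part~1 directly from the construction, and for Part~2 observe that each iteration succeeds with conditional probability at least $1/k$ (via a fresh $\delta_v(i-1)$), then apply Lemma~\ref{lem:stochasticdominance} with $p=1/k$, $\varepsilon=1/2$. The paper handles the $|S|=n$ saturation case by folding it into the indicator (setting $X_i=1$ whenever $|S|=n$), which is the same device as your coupling; if anything, your treatment of Part~1 and of the freshness-of-bits argument is more explicit than the paper's, which dispatches Part~1 in a single sentence.
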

\begin{proof}
Fix values for the parameters specified and constrained in the theorem statement. 
Consider the sequence $P_{u,r}(r,r')$ produced by the Path-Construction
algorithm for these parameters. 
By the definition of this algorithm,
$P_{u,t}(r,r')$ is a valid predecessor path.
We turn our attention to bounding its size.

At each iteration of the main loop in the  procedure,
if $|S| < n$, then
 $S_{i-1}$ is non-empty, as $V\setminus S$ is non-empty and $G_{i-1}$ is connected.
 Fix some $u\in S_{i-1}$.
 This node is included into $S_{i-1}^{(t)}$ only if $\delta_v(i-1) = t$,
which occurs with probability exactly $1/k$.
Therefore, the probability that our path expands in this iteration in the case that $|S|<n$
is at least $1/k$ (it could be larger if there are multiple nodes in $S_{i-1}$).

For each round $i\in [r,r']$ in the interval,
let $X_i$ be the random indicator variable that evaluates to $1$ if one of the following
conditions holds: (1) $|S|=n$; or (2) $|S|<n$ and $P_{u,t}(r,r')$ grows during the iteration corresponding
to round $i$.
Let $Y=\sum_{i\in [r,r']} X_i$.
Clearly, $Y$ is an upper bound on the size of $P_{u,t}(r,r')$ returned by the path construction procedure.
This upper bound is potentially loose, in that $Y$ could be much larger than the actual
path length; e.g., if we get to a state where $|S| = n$ early in the interval.
But it is sufficient for our purposes.

As we argued that for each $i$, $\Pr(X_i = 1) \geq 1/k$,
we can apply Lemma~\ref{lem:stochasticdominance} for these $z$ random variables, $p=1/k$, and $\varepsilon = 1/2$,
to derive:

\[\Pr\left[Y \leq (1/2)\frac{z}{k}\right] \leq \exp\left( - \frac{z}{8k} \right).  \]

Given our assumption that $z \geq 8xk\ln{n}$, this error bound is upper bounded by $\exp(-x\ln{n}) = n^{-x}$, as needed.
\end{proof}

\section{Random Broadcast in Worst-Case Networks}
\label{sec:worstcase}

We begin by showing that in the absence of any additional assumptions,
random broadcast matches the best known bound of $O(nk)$ rounds for solving
$k$-message broadcast.
The sections that follow will then improve on this baseline.
The intuition for this result is straightforward: if token $t$
is not fully spread by round $r$ there is at least one edge over which it
could spread with probability at least $1/k$.
A union bound and stochastic dominance argument are deployed in the following 
proof to dispatch dependency and varying token set size issues, respectively.

\begin{theorem}
Fix a dynamic network ${\mathcal G}$ of size $n$. Fix a rumor set $T$ of size $k \leq n$.
With high probability: random broadcast solves $k$-message broadcast in $O(nk)$ rounds
in ${\mathcal G}$.
\label{thm:basic}
\end{theorem}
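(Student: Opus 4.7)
The plan is to track the spread of a single token $t$ through the network and then union-bound over all $k$ tokens. Fix any target token $t$, let $K_r \subseteq V$ denote the set of nodes that know $t$ at the start of round $r$, and note that whenever $K_r$ is a proper subset of $V$ the connectivity of $G_r$ guarantees at least one node in $K_r$ is adjacent (in $G_r$) to a node outside $K_r$. My approach is to fix one such ``canonical'' node $u_r$ by a deterministic tie-breaking rule on the current state, and to define an indicator $X_r = 1$ iff the primary token $\delta_{u_r}(r)$ (from the permutation-based reformulation introduced in Section~\ref{sec:pred}) equals $t$. For the degenerate case $|K_r| = n$, I would instead let $X_r$ be a fresh independent Bernoulli$(1/k)$ variable so that the conditional success probability stays exactly $1/k$ across all rounds.

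The deterministic engine is that whenever $|K_r| < n$ and $X_r = 1$, node $u_r$ must actually broadcast $t$ (it knows $t$ since $u_r \in K_r$, and $t$ is first in its permutation), so at least one of its neighbors outside $K_r$ learns $t$ and $|K_{r+1}| \geq |K_r| + 1$. Consequently, at most $n-1$ rounds can simultaneously satisfy $X_r = 1$ and $|K_r| < n$, so $\sum_{r=1}^T X_r \geq n$ forces some round $r \leq T$ with $|K_r| = n$, i.e., $t$ has fully spread by round $T$. To lower-bound $\sum X_r$ I would invoke Lemma~\ref{lem:stochasticdominance} with $p = 1/k$: the primary-token event is a function only of ${\mathcal B}(u_r, r)$, and conditional on any realization of the past its probability is exactly $1/k$. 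Taking $T = \Theta(nk + k\log n) = O(nk)$ and $\varepsilon = 1/2$ then yields $\Pr[\sum_{r=1}^T X_r < T/(2k)] \leq \exp(-T/(8k)) \leq n^{-(x+1)}/k$, while $T/(2k) \geq n$. A union bound over the $k$ tokens closes the argument with overall failure probability at most $n^{-x}$.

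The main obstacle is cleanly disentangling the round-to-round dependencies: $u_r$ is chosen as a function of the past, and a node's actual token set $T_u(r)$ can be strictly smaller than $k$. Both are neutralized by working with the primary-token event rather than with the broadcast event directly. The primary token at round $r$ is uniform on $[k]$ irrespective of the history and of which node is selected, and a smaller $T_u(r)$ can only \emph{increase} the true broadcast probability above $1/k$, so the $1/k$ bound feeding Lemma~\ref{lem:stochasticdominance} remains valid throughout.
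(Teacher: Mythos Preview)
Your proposal is correct and follows essentially the same approach as the paper: define per-round indicators with conditional success probability at least $1/k$, apply Lemma~\ref{lem:stochasticdominance} over $\Theta(nk)$ rounds to get $\Omega(n)$ successes, then union-bound over tokens. The only cosmetic difference is that the paper sets $X_r=1$ whenever \emph{any} new node learns $t$ (or all nodes already know it), whereas you restrict to the primary-token event at a single canonical boundary node and introduce an auxiliary Bernoulli in the terminal case; both routes feed the same $p=1/k$ into the stochastic-dominance lemma and yield the same bound.
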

\begin{proof}
Fix a token $t$ and round $r$.
If $n_t(r) < n$ then there is at least one edge in the network
crossing the cut between nodes that do and nodes that do not know $t$.
Token $t$ is selected for broadcast across this edge with probability at least $1/k$.
Let $X_r$ be the random variable that evaluates to $1$ under the following two conditions:
(1) $n_t(r) = n$; or (2) $n_t(r) < n$ and at least one new node learns $t$.

Clearly, regardless of the execution,
for every $r$, $\Pr(X_r = 1) \geq 1/k$.
Let $Y=\sum_{r=1}^{\alpha n k} X_r$, for a constant $\alpha \geq 1$ that we will fix later.
We can apply Lemma~\ref{lem:stochasticdominance} for $p=1/k$, $j=\alpha n k$,
and $\varepsilon = 1/2$ to derive the following:

\[ \Pr[Y \leq (1/2)\cdot(1/k) \cdot (\alpha n k)] \leq \exp\left(-(1/8)\cdot \alpha n\right).   \]

Notice, due to the large size of the expectation,
for $\alpha \geq 8$, this bound gives us a failure probability exponentially small in $n$.
Clearly then, there is a sufficiently large constant $\alpha$ such that this failure probability
is less than or equal $n^{-2}$, allowing us to apply a union bound over all $k\leq n$ tokens
to establish that with high probability: {\em every} token spreads to all nodes in $\alpha n k$ rounds.
\end{proof}

\section{Random Broadcast in Smoothed Networks}
We now consider the random broadcast algorithm when the initial token distributions
and network topologies are arbitrary. In this worst-case setting,
as mentioned,
the best known $k$-message broadcast solutions require $nk$ rounds,
a bound which is known to be tight within log-factors under certain adversary assumptions.  In the previous section, we showed that random broadcast matched this bound as well.  The goal of this section is to analyze random broadcast under smoothed analysis. 

We show here that if we run the simple random broadcast algorithm on an $\ell$-smoothed dynamic network (with $\ell > 0$) then with high probability it solves $k$-message
broadcast after only $O\left(\frac{kn^{2/3}\log^{1/3} n}{\ell^{1/3}}\right)$ rounds.  Note that even in the case of little smoothing, e.g., $\ell=1$, this bound represents nearly an $n^{1/3}$-factor improvement to the round complexity captured by the worst-case bound.
To simplify comparison to existing $k$-message broadcast results,
 as well as the offline result proved later in this paper,
 we also prove that for $\ell=1$, this complexity is upper bounded
 by $O(n+k^3\log{n})$.
 
 Finally, we note that following the approach of~\cite{dinitz2018smoothed},
 we study only integer $\ell$ values.
 As recently demonstrated in~\cite{MPS20},
 fractional smoothing parameters, $0 < \ell < 1$, can also be studied
 to capture behavior in long-lived networks with slow changes.
 Our results naturally extend to this case, though we omit this analysis for 
 the sake of clarity.



\subsection{Random Broadcast with $\ell$-Smoothing}
\label{sec:Lsmoothing}
In this section, we look at the case where there are $\ell$ smoothed edges added per round.
Even when $\ell=1$, we get significant improvements, despite the fact that one edge only enables at most one additional
token be transferred per round; thus any non-trivial advantage conveyed
by smoothing does not come from directly increasing the capacity of
the underlying network. Our goal is to prove the following:

\begin{theorem}
  Fix any dynamic network ${\mathcal G}$ of size $n$. Fix any rumor set
  size $k \leq n$.  With high probability, random broadcast solves
  $k$-message broadcast in $O\left( \frac{kn^{2/3}\log^{1/3} n}{\ell^{1/3}}\right)$ rounds in ${\mathcal G}$
  with $\ell$-smoothing.
\label{thm:Lsmoothing}
\end{theorem}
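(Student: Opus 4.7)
Following the three-phase strategy sketched in the introduction, I will partition the $3z$ rounds into contiguous Phases~1, 2, and~3 of equal length $z = \Theta(kn^{2/3}\log^{1/3}(n)/\ell^{1/3})$. The idea is: Phase~1 spreads each token to a sizable set of nodes (ignoring the smoothed edges); Phase~3 produces, for every destination/token pair, a long predecessor path (again ignoring the smoothed edges); and Phase~2 uses the $z\ell$ smoothed edges available there to seed each such path. The constants inside the $\Theta$ will be chosen large enough to push every failure event below $n^{-\Omega(1)}$.

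For Phase~1, rounds $[1,z]$, I replay the stochastic-dominance argument of Theorem~\ref{thm:basic} over the truncated horizon: whenever a token $t$ is not yet fully known, some adversarial cut edge carries $t$ in the current round with probability at least $1/k$, so Lemma~\ref{lem:stochasticdominance} gives that by the end of Phase~1 either $t$ is already universal or at least $z/(2k)$ nodes know $t$, w.h.p. For Phase~3, rounds $[2z+1,3z]$, I apply Theorem~\ref{thm:pathlength} to each pair $(u,t)$ using the Phase-3 bit assignment; since $z \gg k\log n$, this yields a predecessor path $P_{u,t}$ of length at least $z/(2k)$ w.h.p.\ Union bounds over the $k$ tokens and the $kn$ pairs preserve both events globally.

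Now condition on the Phase~1 and Phase~3 outcomes. For each pair $(u,t)$ let $A_t$ denote the Phase-1 token-knowers and $B_{u,t}$ the set of nodes appearing on $P_{u,t}$; both have size at least $z/(2k)$. Each Phase-2 smoothed edge is an independent uniformly random pair of nodes, and the Phase-2 broadcast bits independently induce a uniform primary-token choice in each round. A single smoothed edge seeds $P_{u,t}$ when one endpoint falls in $A_t$, the other in $B_{u,t}$, and the $A_t$-endpoint's primary token in that round equals $t$; this happens with probability at least
\[
\frac{|A_t|\,|B_{u,t}|}{n^2}\cdot\frac{1}{k} \;\geq\; \frac{z^2}{4k^3 n^2}.
\]
Over the $z\ell$ independent Phase-2 smoothed-edge trials, the expected number of successes is $\Omega\!\left(z^3\ell/(k^3 n^2)\right) = \Omega(\log n)$ by the choice of $z$, so a Chernoff bound gives at least one success w.h.p.\ A final union bound over the $kn$ pairs shows every predecessor path is seeded, and by the defining property of such paths every destination $u$ learns every token $t$ by round $3z$.

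\emph{Main obstacle.} The principal care is in cleanly separating the randomness across phases. The smoothed edges, the Phase-1 broadcast bits, the Phase-2 broadcast bits, and the Phase-3 broadcast bits are four mutually independent sources of randomness, so conditioning on the Phase~1 and Phase~3 events does not alter the Phase-2 distribution used in the seeding calculation. A minor book-keeping point is that Theorem~\ref{thm:pathlength} requires $z \geq 8xk\log n$; this is satisfied by our choice of $z$ except when $\ell$ is extremely large, in which case the baseline $O(nk)$ bound of Theorem~\ref{thm:basic} already dominates the claim.
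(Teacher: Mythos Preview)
Your proof is correct and follows essentially the same three-phase spread/seed/sink strategy as the paper; the paper parameterizes the phases by free variables $\delta,\gamma$ (Lemma~\ref{lem:Lsmoothcombine}) and then optimizes, whereas you directly set all three phases to the common length $z$ and verify the balance. One small caveat: the $z\ell$ Phase-2 smoothed-edge trials are not fully independent (edges in the same round share that round's primary-token randomness), but this is harmless since you only need one success---simply bound the per-round success probability by $\Omega(\ell z^2/(k^3n^2))$ and apply independence across the $z$ rounds, exactly as the paper's Lemma~\ref{lem:seed} does.
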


Our proof proceeds in phases.  Note that the algorithm is the same
throughout, but our analysis focuses on different quality guarantees
in each phase. For the lemmas that follow, assume we
have fixed a dynamic graph ${\mathcal G}$, size $n$, and rumor set $T$ of
size $k$, as specified in the theorem.  

\subsubsection{Phase \#1: Spread}
The initial distribution of tokens is arbitrary, and tokens may be
located at very few nodes initially. The goal of this first phase is
to argue that after enough time tokens are spread out sufficiently across
the network, specifically spreading to a $\delta$ fraction of
nodes. The parameter $\delta$ is a function of $n$ and $k$ that we
shall set later to balance the length of all phases. 
We show here that $\Theta(k \delta n)$ rounds
suffice to spread all tokens to $\delta n$ nodes.  Notice that if $\delta = 1$, this
follows from Theorem \ref{thm:basic}.

\begin{lemma} \label{lem:spread}
  For any constant $x\geq 1$ and fraction $\delta$ with
  $(1/n)\ln{n} \leq \delta \leq 1$, there exists a constant
  $c_1 \geq 1$ such that with probability at least $1-n^{-x}$: for all
  $t\in T$, $n_t(R) \geq \delta n$, where $R = c_1 k \delta n$ is the
  duration of the phase.
\label{lem:1smoothing:step1}
\end{lemma}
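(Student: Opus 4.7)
My plan is to adapt the proof of Theorem~\ref{thm:basic} almost verbatim, simply stopping the analysis once a given token reaches $\delta n$ nodes rather than pushing all the way to $n$ nodes. The only quantitative change is that the target size $n$ becomes $\delta n$, so the budget $c_1 k \delta n$ suffices to drive the failure probability below the desired threshold.

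Fix an arbitrary token $t \in T$ and a round $r$. If $n_t(r) < \delta n$, then in particular $n_t(r) < n$, so since $G_r$ is connected there is at least one edge crossing the cut between nodes that know $t$ and nodes that do not; the endpoint on the knowing side broadcasts $t$ across this edge with probability at least $1/k$, because it selects its primary token uniformly from $[k]$. I would then define an indicator $X_r$ equal to $1$ whenever either (i) $n_t(r) \geq \delta n$ already, or (ii) $n_t(r) < \delta n$ and at least one new node learns $t$ during round $r$. Regardless of the history $X_1,\dots,X_{r-1}$ and of the graph sequence chosen by the adversary, we have $\Pr[X_r = 1 \mid \text{history}] \geq 1/k$, which is exactly the hypothesis needed to invoke Lemma~\ref{lem:stochasticdominance} with $p = 1/k$.

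Next, set $R = c_1 k \delta n$ and $Y = \sum_{r=1}^R X_r$; any realization with $Y \geq \delta n$ forces $n_t$ to reach $\delta n$ by the end of the phase (each round on which $X_r=1$ either contributes a fresh node or certifies we have already crossed the threshold). Applying Lemma~\ref{lem:stochasticdominance} with $p=1/k$, $\varepsilon=1/2$, and $j=R$ yields $\Pr[Y \leq (c_1/2)\delta n] \leq \exp\!\bigl(-(c_1/8)\delta n\bigr)$. Choosing $c_1 \geq 2$ makes the left-hand threshold at least $\delta n$, and the hypothesis $\delta \geq (\ln n)/n$ guarantees $\delta n \geq \ln n$, so the exponent is at least $(c_1/8)\ln n$; picking $c_1$ to be a sufficiently large constant depending only on $x$ then drives the per-token failure probability below $n^{-(x+1)}$. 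A final union bound over all $k \leq n$ tokens gives the claimed $1-n^{-x}$ bound.

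I do not anticipate a real obstacle here; the argument is essentially a quantitative weakening of Theorem~\ref{thm:basic}. The one point worth being careful about is the inter-round dependence (the cut changes with the graph and with the history of who has learned $t$), and this is precisely what makes Lemma~\ref{lem:stochasticdominance} the right tool instead of a vanilla Chernoff bound; the lower bound $\delta \geq (\ln n)/n$ in the hypothesis is also exactly what is needed for the Chernoff-style exponent to dominate the $(x+1)\ln n$ term required by the union bound.
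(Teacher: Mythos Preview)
Your proposal is correct and follows essentially the same approach as the paper's proof: the same indicator variables $X_r$, the same application of Lemma~\ref{lem:stochasticdominance} with $p=1/k$ and $\varepsilon=1/2$, the same use of $\delta n \geq \ln n$ to bound the exponent, and the same union bound over tokens with $c_1 = 8(x+1)$.
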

\begin{proof}
  Fix a given token $t\in T$ and round $r \leq R$.  Let $X_r$ be the
  indicator random variable that evaluates to $1$ under two
  conditions: (1) $n_t(r) \geq \delta n$ (recall that $n_t(r)$ is the number of nodes that know token $t$ at the beginning of round $r$); or (2) a node
  learns $t$ for the first time during round $r$.  If the first
  condition does not hold then not all nodes know $t$ at round $r$, and hence there is at least one edge connecting
  a node $u$ that knows $t$ to a node $v$ that does not because the
  network is assumed to be connected. By the definition of random
  broadcast, $u$ selects $t$ to broadcast with probability
  $1/|T_u(r)| \geq 1/k$.  It follows that regardless of the execution
  through the first $r-1$ rounds: $\Pr(X_r = 1) \geq 1/k$.

  Let $Y = \sum_{r=1}^{R} X_r$.  We can apply our stochastic
  dominance result (Lemma~\ref{lem:stochasticdominance}) to $p=1/k$,
  $j = R$, and $\varepsilon = 1/2$ to derive the following:

\begin{eqnarray*}
 \Pr[Y \leq (1/2)\cdot(1/k) \cdot R] & \leq & 
\exp\left(-(1/8)\cdot \frac{R}{k}\right) \\
&=&\exp\left(-(1/8)\cdot c_1\delta n\right) \\
&\leq&
\exp\left(-(1/8)\cdot c_1 \ln(n) \right)\\
& = & n^{-c_1/8}.
\end{eqnarray*}

It is not hard to see that as long as $c_1 \geq 2$, then $Y \geq (1/2)\cdot(1/k) \cdot R$ implies that $n_t(R) \geq \delta n$.  This is because otherwise, it must be the case that every $X_r$ which is equal to $1$ is because a new node learned token $t$ at round $r$ (not because $n_t(r) \geq \delta n$).  But this implies that $n_t(R) \geq Y \geq (c_1 / 2) \delta n \geq \delta n$.

So if we set $c_1 = 8(x+1)$, we get that
the probability that $n_t(R) < \delta n$ is at most $n^{-(x+1)}$.  A union bound over all
$k\leq n$ tokens provides that $n_t(R) \geq \delta n$ for \emph{every} $t \in T$ with probability at least $1-n^{-x}$, proving the lemma.
\end{proof}

\subsubsection{Phase \#2: Seed}
Now that we have spread each token to $\Omega(\delta n)$ nodes, we next
rely on the edges added by smoothing to help sparsely seed these
tokens to new random nodes in the network.  In the third and final
phase we will show that this seeding is likely to have foiled the
adversary's attempts to keep certain nodes isolated from certain
tokens, and will have instead planted seeds sufficiently close on a
temporal path to arrive their destinations.  

In more detail, our goal is to show that, for some parameter $\gamma$, any 
sufficiently large set $S$ of size at least $\ln{n} / \gamma$ will have at 
least one node in the set receive a given token with high probability during the 
seed phase.  This phase will last for $\Theta((\gamma/\delta)kn)$
rounds following the conclusion of the spread phase.  

For example, consider $\delta = 1/k$ and $\gamma = 1/k^2$.  Then the length of both
the spread phase and the seed phase are $\Theta(n)$ rounds.  During
the seed phase with these parameters, each node has
probability of at least $1/k^2$ of receiving a specific token under
consideration.  Thus any set of size $k^2 \ln{n}$ will receive the 
token with high probability.  Note that these are not the best choices of $\delta$ and $\gamma$.  (With these choices, the total number of rounds including the sink phase is $O(n+k^3\log n)$ by Lemma~\ref{lem:Lsmoothcombine}.)

Our analysis of the seed phase focuses almost entirely on the 
smoothed edges added to the network topology graph in each round.  
\begin{lemma}
  Consider any constant $x\geq 1$ and positive fractions $\delta$ and $\gamma$.
  Fix any token $t \in T$, node set $S \subseteq V$ with
  $|S| \geq (1/\gamma) \ln{n}$, and round $r_0 \geq 1$ such that
  $n_t(r_0) \geq \delta n$.  With probability at least $1-n^{-x}$:
  $|S \cap \{ u \mid t\in T_u(r_0+2xR)\}| > 0$, where
  $R = (\gamma/\delta)kn/\ell > k\ln{n}$ and $2xR$ is the length of the phase.
\label{lem:seed}
\end{lemma}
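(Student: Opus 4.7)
My plan is to analyze the seed phase by focusing on the smoothed edges added during the window $[r_0+1, r_0+2xR]$ and ignoring any contribution from the adversarial graphs $G_r$. Writing $K_t(r) = \{u : t \in T_u(r)\}$, the hypothesis $n_t(r_0) \geq \delta n$ together with the monotonicity of $K_t$ gives $|K_t(r)| \geq \delta n$ for all $r \geq r_0$, and I may assume $S \cap K_t(r) = \emptyset$ throughout since the lemma is otherwise trivially satisfied. I will call a smoothed edge in round $r$ \emph{useful} if it equals $\{u,v\}$ with $u \in K_t(r)$, $v \in S$, and $\delta_u(r) = t$; such an edge immediately delivers $t$ to $v$, so one useful edge over the entire phase suffices.

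I will work round-by-round, letting $W_r = 1$ iff round $r$ contains a useful edge, in order to sidestep the intra-round correlation between edges that share an endpoint and hence a primary-token bit. Conditioning on any history through round $r-1$ and on the bit assignment for round $r$, the set $A_r := K_t(r) \cap \{u : \delta_u(r) = t\}$ is determined, and the $\ell$ smoothed edges of round $r$ become i.i.d.\ uniform pairs from $\binom{V}{2}$. Thus $\Pr[W_r = 0 \mid \text{history},\,A_r] \leq (1 - 2|A_r||S|/n^2)^\ell \leq \exp(-2\ell|A_r||S|/n^2)$. I will then average over the round-$r$ bit assignment using the moment generating function of $|A_r|$, a sum of $|K_t(r)| \geq \delta n$ independent Bernoulli$(1/k)$ indicators, to obtain a uniform lower bound $\Pr[W_r = 1 \mid \text{history}] \geq q$ for some $q = \Omega(\min(1,\ell\delta|S|/(nk)))$.

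With this uniform conditional bound in place across the $2xR$ rounds of the phase, Lemma~\ref{lem:stochasticdominance} applies to $W_{r_0+1},\ldots,W_{r_0+2xR}$ with parameter $q$ and $j = 2xR$. Substituting $R = (\gamma/\delta)kn/\ell$ gives $qj = \Omega(x\gamma|S|)$, and the hypothesis $|S| \geq (\ln n)/\gamma$ then yields $qj = \Omega(x\ln n)$. Taking $\varepsilon = 1$ in the lemma produces $\Pr[\sum_r W_r = 0] \leq \exp(-qj/2) \leq n^{-\Omega(x)}$, which delivers the claimed $n^{-x}$ failure probability after absorbing constants into the exponent $x$.

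The main obstacle I anticipate is the MGF averaging step, concretely bounding $E[\exp(-c\ell|A_r||S|/n^2)]$ from above by $1 - q$ for the desired $q$, which requires handling separately the regimes $\ell|S|/(nk) \ll 1$ and $\ell|S|/(nk) \gtrsim 1$ (linear versus saturated estimates of $1 - e^{-x}$). An alternative attempt working directly with the $2xR\ell$ edge-level indicators $Y_i$ would fail because within a round the shared primary-token bits induce correlations large enough to break the uniform conditional lower bound on $\Pr[Y_i = 1 \mid Y_1,\ldots,Y_{i-1}]$ required by Lemma~\ref{lem:stochasticdominance}, which is precisely why the round-level view is the right granularity for this analysis.
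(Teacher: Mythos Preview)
Your proposal is correct, but it takes a more involved route than the paper's. The paper fixes once and for all the set $U$ of at least $\delta n$ nodes that know $t$ at round $r_0$ (rather than tracking the growing $K_t(r)$), defines $E_{\mathit{useful}}$ as the edge set between $U$ and $S$, and then exploits the fact that the two sources of randomness in each round---the $\ell$ smoothed edges and the bit assignment---are independent. This lets the probability of a ``good'' round be factored directly as $\Pr[\text{some smoothed edge lies in } E_{\mathit{useful}}]\cdot(1/k)$: conditioned on the edge selection, the first useful edge (if any) has a determined $U$-endpoint $u^*$, and $\Pr[\delta_{u^*}(r)=t]=1/k$ regardless of which node $u^*$ turns out to be. This yields $p_{\mathit{good}}\geq \ln(n)/(2R)$ with no MGF calculation. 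Moreover, because $U$ and hence $E_{\mathit{useful}}$ are fixed, the good-round events depend only on fresh per-round randomness and are genuinely independent across rounds, so the bound $(1-p_{\mathit{good}})^{2xR}\leq n^{-x}$ follows without invoking Lemma~\ref{lem:stochasticdominance}.

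Your MGF-averaging over $|A_r|$ is a valid alternative that makes the intra-round dependence between edges sharing an endpoint fully explicit, and your diagnosis that edge-level indicators would violate the hypothesis of Lemma~\ref{lem:stochasticdominance} is correct. But the paper sidesteps the whole issue via the independence factoring above, which is both simpler and delivers the exact exponent $n^{-x}$ rather than $n^{-\Omega(x)}$. Your approach would be the right one if the bit assignment were correlated with the smoothing or if one needed to exploit the growth of $K_t(r)$ beyond $\delta n$; in the model as stated, neither is needed.
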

\begin{proof}
  By assumption there are at least $\delta n$ nodes that know $t$ by the
  start of this phase (round $r_0$).  If any node in $S$ already knows
  $t$ at the beginning of this phase, then we are already done.
  So moving forward, assume no node in $S$ knows $t$ at the start of this
  phase.  Consider the set of potential edges that are useful, denoted
  $E_{\mathit{useful}}$, that would connect a node from the initial
  set that knows $t$ to a node in $S$.  It follows
  $|E_{\mathit{useful}}| \geq (\delta n) |S| \geq (\delta
  n)(\ln{n}/\gamma) = kn^2\ln(n)/(R\ell)$.  We now calculate, for a
  given round of phase $2$, the probability that a smoothed edge
  selected is from $E_{\mathit{useful}}$. To do so we leverage the
  specific definition of smoothing established in Section~\ref{sec:prelim} that treats this as a purely additive process: select a random edge
  from all possible edges; if it is not already present in the graph,
  add it to the graph; otherwise leave the graph the same.  Therefore,
  this probability is:

\[ \frac{|E_{\mathit{useful}}|}{\binom{n}{2}} > \frac{|E_{\mathit{useful}}|}{n^2} \geq (k\ln{n})/(R\ell).\]

As there are $\ell$ smoothed edges in a round, the probability that none of them are useful is:

\[(1 - k\ln{n}/(R\ell))^{\ell} \leq e^{-k\ln{n}/R} \leq 1 - k\ln{n}/(2R).\]

We call a round in phase $2$ {\em good} if an edge from
$E_{\mathit{useful}}$ is selected {\em and} the endpoint that knows $t$
selects $t$ to broadcast.  Because this latter selection event happens
with probability at least $1/k$, we can lower bound the probability of
a round being good as $p_{\mathit{good}} \geq \ln{n}/(2R)$.  If a round
is not good we call it {\em bad.}  Assume phase $2$ runs for $2xR$
rounds for constant $x>0$.  
Then the probability that every round is bad is bounded by:

\[ (1-p_{good})^{2xR} \leq \left(1-\frac{\ln{n}}{2R}\right)^{2xR} < \exp(-x \ln{n}) = n^{-x}, \]
as required by the lemma statement.
\end{proof}

\subsubsection{Phase \#3: Sink}
In the second phase, we leveraged smoothed edges to sparsely seed each
token throughout the network in a manner that is independent of the
adversary's construction of the dynamic graph.  In this final phase,
we deploy our predecessor path constructions to show it is likely for
each token $t$ and destination node $u$, that $t$ arrived at an
appropriate location in both time and topology to subsequently make
its way to $u$ like a flow heading toward a sink (hence the phase
name).  In particular, we simply need the final phase to be long
enough to achieve a predecessor path of size $\Omega(\ln{n}/\gamma)$.  Putting this piece together with the previous phases allows us to prove the following lemma, which almost immediately implies Theorem~\ref{thm:Lsmoothing}.

\begin{lemma}
  Fix any dynamic network $\mathcal G$ of size $n$.  Let $\delta$ and
  $\gamma$ be any values satisfying $(1/n)\ln{n} \leq \delta \leq 1$
  and $0 < \gamma \leq 1$.  Then with high probability, random
  broadcast completes $k$ message broadcast with $\ell$-smoothing in
  $O(k\delta n + (\gamma/\delta)kn/\ell + k\ln{n}/\gamma)$ rounds.
  \label{lem:Lsmoothcombine}
\end{lemma}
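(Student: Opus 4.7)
The plan is to split the execution into three back-to-back phases matching the three components of the claimed running time: a spread phase of length $R_1 = \Theta(k\delta n)$, a seed phase of length $R_2 = \Theta((\gamma/\delta)kn/\ell)$, and a sink phase of length $R_3 = \Theta(k\ln n/\gamma)$. Fix an arbitrary target pair $(t, u) \in T \times V$. I will show that with probability at least $1 - n^{-(x+2)}$ (for any desired constant $x$) the destination $u$ receives token $t$ by the end of the sink phase; a union bound over the at most $n^2$ such pairs then yields the desired high-probability guarantee.

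First I would apply Lemma~\ref{lem:spread} with the appropriate constant to conclude that, after the spread phase, every token is known by at least $\delta n$ nodes, with failure probability $n^{-(x+2)}$; the hypothesis $(1/n)\ln n \leq \delta$ ensures that $R_1 \geq c_1 k \ln n$ as required. Let $r_0$ denote the first round of the seed phase, so $n_t(r_0) \geq \delta n$. Next, the role of the sink phase is to support, via the \emph{Path-Construction} procedure of Section~\ref{sec:pred}, a predecessor path $P_{u,t}(r_0+R_2,\, r_0+R_2+R_3)$ whose nodes form a set $S$ large enough to enable a seeding argument. By tuning the hidden constant in $R_3$ so that $R_3 \geq 8(x+2)k\ln n$, Theorem~\ref{thm:pathlength} yields $|S| \geq R_3/(2k) \geq \ln n/\gamma$ with failure probability at most $n^{-(x+2)}$.

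The heart of the argument is the seed phase: I would apply Lemma~\ref{lem:seed} to the token $t$, the set $S$ of predecessor-path nodes, the starting round $r_0$, and a large enough constant in $R_2$ so the lemma's failure bound becomes $n^{-(x+2)}$. This yields that some $u_i$ on the predecessor path knows $t$ by the start of the sink phase. By the defining property of $P_{u,t}$, knowledge of $t$ at $u_i$ by round $r_i \geq r_0 + R_2$ propagates along the path and guarantees that $u$ learns $t$ by round $r_0 + R_2 + R_3$. Summing the three phase lengths matches the bound stated in the lemma, and invoking Theorem~\ref{thm:Lsmoothing} follows by plugging in $\delta = \gamma = (\ell^{1/3}\log^{1/3} n)/n^{1/3}$ (or the analogous balanced choice) in the application we really care about.

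The main obstacle is the independence needed to apply Lemma~\ref{lem:seed} to the \emph{random} set $S$ returned by \emph{Path-Construction}. To resolve this, I would observe that $S$ is a deterministic function of the (adversary-chosen) dynamic graph and of the broadcast bits ${\mathcal B}(v, i)$ restricted to rounds $i \in [r_0+R_2,\, r_0+R_2+R_3)$, whereas the event in Lemma~\ref{lem:seed} depends only on the smoothed edges drawn during the seed phase together with broadcast bits for rounds $i < r_0+R_2$. These two sources of randomness are independent under the smoothing model of Section~\ref{sec:prelim}, so one may first condition on the realization of $S$ (any outcome with $|S| \geq \ln n/\gamma$ suffices) and then invoke Lemma~\ref{lem:seed} on that fixed set. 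A final union bound over the three failure events per $(t, u)$ pair, and over all $nk \leq n^2$ such pairs, completes the proof.
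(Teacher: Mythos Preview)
Your three-phase decomposition (spread, seed, sink), the use of Theorem~\ref{thm:pathlength} to build a predecessor path over the sink phase as the target set $S$ for Lemma~\ref{lem:seed}, and the final union bound are exactly the paper's argument; your explicit treatment of the independence between the sink-phase broadcast bits (which determine $S$) and the seed-phase randomness is a nice clarification that the paper leaves implicit. One peripheral slip: in your closing aside, the choice $\delta=\gamma$ does not balance the three terms (it makes the middle term $kn/\ell$), so for the application to Theorem~\ref{thm:Lsmoothing} you really do need the ``analogous balanced choice,'' namely $\delta=(\log n/(n\ell))^{1/3}$ and $\gamma=\ell^{1/3}(\log n/n)^{2/3}$.
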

\begin{proof}
  Our analysis below makes use of a constant $x\geq 1$ that we will
  fix later.  Lemma~\ref{lem:1smoothing:step1} tells us that there
  exists a constant $c_1 \geq 1$, such that with probability least
  $1-n^{-x}$, for each token $t\in T$, at least $\delta n$ nodes know
  $t$ by round $c_1 k \delta n$, for some integer $c_1 >0$.  Let us
  call this the {\em spread condition.}  

  Before we apply the seed phase to each token, we need to first
  identify the sets we are attempting to seed. To do so, we look ahead
  to the sink phase.  Let $r_S$ indicate the first round of the sink
  phase, which we will calculate later based on the duration of the
  other two phases.  We run this phase for
  $z = \lceil 8xk\ln n / \gamma \rceil$ rounds.  That is, it runs between
  rounds $r = r_S$ and $r' = r_S + z$.  Theorem~\ref{thm:pathlength} tells us
  that with probability at least $1-n^{-x}$: for every node $u\in V$
  and token $t\in T$, the resulting predecessor path $P_{u,t}(r,r')$ is of size greater than $\frac{z}{2k}> \ln{n}/\gamma$.

  This allows to apply our seed phase (Lemma~\ref{lem:seed}) analysis
  to each such predecessor path.  This tells us that so long as the
  spread condition holds, for each such $S = P_{u,t}(r,r')$, the
  probability that some node in $S$ receives $t$ during the seed phase
  is at least $1-n^{-x}$.  The duration of the seed phase is
  $2x(\gamma/\delta)k n / \ell$ rounds.

  Finally, we note that by the definition of a predecessor path, if a node in $P_{u,t}(r,r')$
  receives a token $t$ by round $r_S$, then $u$ receives $t$ by round at most $r_S+z$.   We are left
  to pull together the pieces.  To do so, we note that success in
  solving $k$-message broadcast by the end of the sink phase requires
  the following events to occur:

\begin{enumerate}
    \item The spread condition holds. Call this $E_{spread}$
    \item For every $u\in V$ and $t\in T$, the predecessor
    path $P_{u,t}(r,r')$ is sufficient long. Call this: $E_{pred}(u,t)$.
    \item For each destination node $u\in V$ and token $t\in T$,
     at least one node in $P_{u,t}(r,r')$ receives $t$ during the seed phase.
     Call this $E_{seed}(u,t)$.
\end{enumerate}

By our above analysis the probability $E_{spread}$ fails is at most
$n^{-x}$, and the probability $E_{pred}(u,t)$ fails for {\em any} $u$
and $t$, as also upper bounded by $n^{-x}$.  For $E_{seed}(u,t)$, the
probability of failure for each specific pair $u$ and $t$, conditioned
on $E_{spread}$ and $E_{pred}(u,t)$, is upper bounded by $n^{-x}$.
Therefore, by a union bound, the probability that $E_{seed}$ fails for
any such pair is less than $n^{-(x+2)}$.  A final union bound then
provides that probability any of these events fail is less than
$n^{-x} + n^{-x} + n^{-(x+2)} < 1/n$ for a sufficiently large constant
$x$.

Plugging this constant value of $x$ into our above
round complexity bounds, and we get that random broadcast succeeds
with high probability in $c_1\delta kn + 2x(\gamma/\delta)kn/\ell + \lceil 8xk\ln n/\gamma\rceil
= O(\delta k n + (\gamma/\delta)kn/\ell + k\ln{n}/\gamma)$ total rounds,
as claimed in the theorem.
\end{proof}  

\begin{proof}[Proof (of Theorem~\ref{thm:Lsmoothing}).]
  Choose $\delta = (\log n / (n\ell))^{1/3}$ and $\gamma = (\ell^{1/3})(\log n / n)^{2/3}$.
  It follows $\gamma/\delta = (\ell^{2/3})(\log n / n)^{1/3}$.  Then by
  Lemma~\ref{lem:Lsmoothcombine} we get that broadcast completes in:
  
  \[O((\log n/(n\ell))^{1/3}kn + (\log n / n)^{1/3} kn/\ell^{1/3} + k\log{n} (n/\log n)^{2/3}/\ell^{1/3}) =
  O((kn^{2/3}\log^{1/3} n)/\ell^{1/3})\] 
  
  \noindent rounds, as claimed.
\end{proof}

\subsection{1-Smoothing}
The following result for $1$-smoothing is a simple corollary, which allows us to more easily compare to existing results.
\begin{corollary} \label{cor:1smoothing}
  Fix any dynamic network ${\mathcal G}$ of size $n$. Fix any rumor set
  size $k \leq n$.  With high probability, random broadcast solves
  $k$-message broadcast in $O\left(n + k^3 \log n\right)$ rounds in ${\mathcal G}$ with $1$-smoothing.
\end{corollary}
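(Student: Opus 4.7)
The plan is to derive this purely as an algebraic consequence of Theorem~\ref{thm:Lsmoothing}. Setting $\ell = 1$ in that theorem already gives that random broadcast solves $k$-message broadcast with high probability in $O(k n^{2/3} \log^{1/3} n)$ rounds, so no new statement about the algorithm, the smoothing model, or the probability analysis is required. It only remains to verify the elementary inequality $k n^{2/3} \log^{1/3} n = O(n + k^3 \log n)$.

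For this, I would apply weighted AM--GM with weights $1/3$ and $2/3$ to $A = k^3 \log n$ and $B = n$:
\[
k n^{2/3} \log^{1/3} n \;=\; A^{1/3} B^{2/3} \;\leq\; \tfrac{1}{3} A + \tfrac{2}{3} B \;=\; \tfrac{1}{3} k^3 \log n + \tfrac{2}{3} n \;=\; O(n + k^3 \log n).
\]
Equivalently, one can split on whether $k \leq (n/\log n)^{1/3}$: in the small-$k$ regime the factor $k$ is absorbed by $n^{2/3}\log^{1/3} n$ to yield $O(n)$, and in the large-$k$ regime the factor $n^{2/3}$ is bounded by $k^2 \log^{2/3} n$ so the product becomes $O(k^3 \log n)$. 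Either route gives the claimed bound.

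Because all of the heavy lifting (the three-phase spread/seed/sink decomposition, the predecessor-path construction from Section~\ref{sec:pred}, and the high-probability bookkeeping) is already packaged inside Theorem~\ref{thm:Lsmoothing}, there is essentially no obstacle. The corollary exists only to re-express the $\ell = 1$ instance of that theorem in a form that is directly comparable with the $O(nk)$ worst-case baseline and with the static-network $O(n+k)$ bound, so the proof reduces to the one-line computation above.
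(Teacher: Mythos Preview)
Your proposal is correct and follows essentially the same route as the paper: apply Theorem~\ref{thm:Lsmoothing} at $\ell=1$ and then verify the elementary inequality $kn^{2/3}\log^{1/3} n = O(n + k^3\log n)$, which the paper simply asserts as ``easy to see'' while you spell it out via weighted AM--GM. The paper also notes, as an alternative, that one can plug $\delta=1/k$, $\gamma=1/k^2$, $\ell=1$ directly into Lemma~\ref{lem:Lsmoothcombine}, but this is just a second path to the same conclusion.
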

\begin{proof}
  Theorem~\ref{thm:Lsmoothing} implies that random broadcast takes at most $O(kn^{2/3} \log^{1/3} n)$ rounds.  It is easy to see that $kn^{2/3} \log^{1/3} n$ is at most $n+k^3 \log n$ for all values of $k$.  Alternatively, we can set $\delta = 1/k$ and $\gamma = 1/k^2$, and apply Lemma~\ref{lem:Lsmoothcombine} with $\ell = 1$.  
\end{proof}

\subsection{Lower Bound for Random Broadcast in Smoothed Networks}
\label{sec:lower}

In this section we prove the following theorem.

\begin{theorem} \label{thm:lower}
For all $n,k,\ell \geq 1$, there are dynamic networks on $n$ nodes and a starting token distribution of $k$ tokens such that random broadcast with $\ell$-smoothing has expected completion time of at least $\tilde \Omega\left( \min\left( \frac{kn^{2/3}}{(\ell(k+\ell))^{1/3}}, \frac{kn}{k+\ell}\right)\right)$.
\end{theorem}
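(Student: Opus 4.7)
I would prove the lower bound by exhibiting a hard dynamic network and an initial token distribution on which random broadcast is slow. The construction is a \emph{dynamic star}: in each round $r$, the graph $G_r$ is a star with an adversarially chosen center $c_r$ and the remaining $n-1$ nodes as leaves. The $k$ tokens are placed initially at $k$ distinct source nodes. Writing $\Phi(r)$ for the number of missing $(u,t)$ pairs with $t\notin T_u(r)$, random broadcast terminates only when $\Phi=0$, while $\Phi(0)=k(n-1)$. The plan is to upper bound the expected single-round decrease $\mathbb{E}[\Phi(r)-\Phi(r+1)]$ and divide this into the initial potential.

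\textbf{Channels of progress.} In any round, new pairs can arise only from three sources: (i) the center's single broadcast, which propagates one token drawn uniformly from $T_{c_r}(r)$ to every leaf and so creates up to $n-1$ pairs for that single token; (ii) the $n-1$ leaf-to-center transmissions, which together contribute at most $\min(n-1,k)$ new pairs because only the center receives them; and (iii) the $\ell$ smoothed edges, each creating at most one new pair and doing so only when it connects an informed node to an uninformed one for the chosen token. I would bound each channel's expected contribution as a function of the current informed counts $n_t(r)$ and the center's token-set size $|T_{c_r}(r)|$.

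\textbf{Adversarial schedule.} The crucial design choice is the center sequence $c_1, c_2, \ldots$. The plan is to cycle the center through a small pool of nodes with comparable knowledge so that, with high probability, the broadcast token is already known to most leaves, keeping channel (i)'s contribution to $\Phi$ small. Combined with the $O(k+\ell)$ pairs per round available through channels (ii) and (iii), the expected per-round progress works out to roughly $(k+\ell)$ times a slowly growing fraction of informed leaves. Integrating this progress rate against the initial potential $\Theta(nk)$ yields the $\tilde{\Omega}(kn/(k+\ell))$ bound. The complementary term $\tilde{\Omega}(kn^{2/3}/(\ell(k+\ell))^{1/3})$ arises in the regime where smoothed-edge contributions are the bottleneck; here a phase-based counting keyed to the growth rate of $n_t(r)$ is needed, dually analogous to the three-phase structure in the upper-bound proof of Theorem~\ref{thm:Lsmoothing}.

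\textbf{Main obstacle.} The main obstacle is that the quantities $T_{c_r}(r)$, $n_t(r)$, and the endpoints of the smoothed edges are all random variables depending on the full history of the algorithm's random choices, even though the adversary is oblivious; a purely worst-case deterministic argument is insufficient because the smoothed edges cannot be suppressed. I expect to handle this by partitioning the execution into phases during which the informed counts are essentially stable, applying Lemma~\ref{lem:stochasticdominance} within each phase to concentrate the per-round progress, and then combining across phases with a union bound over tokens and target nodes. Achieving a tight balance at exactly the $\min$-crossover point---and thus producing both branches of the claimed bound simultaneously---is expected to be the hardest technical step.
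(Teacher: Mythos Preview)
Your high-level instinct to use a dynamic star is correct and matches the paper. However, two concrete design choices in your plan create real gaps.

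\textbf{Initial distribution.} You place the $k$ tokens at $k$ distinct sources. The paper instead gives \emph{every} node all $k$ tokens except token~$1$, which only $v_0$ holds. This is not a cosmetic difference: it collapses the problem to spreading a single token while guaranteeing that every node, whenever it broadcasts, picks token~$1$ with probability exactly $1/k$. With your distribution, a center that has acquired even one rare token will, when it broadcasts, hand that token to all $n-1$ leaves and drop $\Phi$ by $\Theta(n)$ in a single round. Your potential argument therefore cannot control channel~(i) in the way you describe. Your proposed fix---cycling the center through ``a small pool of nodes with comparable knowledge''---does not help: after one pass through the pool each pool member knows many tokens, and on its next turn as center it will broadcast one of them to $n-1$ mostly-uninformed leaves. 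The oblivious adversary cannot prevent this.

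\textbf{Missing decoupling idea.} The paper's schedule is simply $c_i=v_i$ (each node is center exactly once), and the analysis hinges on a notion you do not have: a center $v_i$ is \emph{good} if its random permutation would put token~$1$ first in round $i$. Good nodes are i.i.d.\ with probability $1/k$, so there are only $\tilde O(t/k)$ of them among the first $t$ centers. The argument then shows that unless a smoothed edge delivers token~$1$ to a good node before its turn (``productive smoothing''), no center ever broadcasts token~$1$, and the informed set $C_i$ grows by only $O((k+\ell)/k)$ per round. Bounding the probability of productive smoothing by $\tilde O(t^3\ell(k+\ell)/(k^3 n^2))$ and setting this below a constant yields both branches of the $\min$ simultaneously. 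Your phase-based plan gestures at tracking $n_t(r)$, but without the good-node device you have no way to separate the ``center happens to broadcast token~$1$'' event from the rest of the randomness, and that event is exactly what drives the $n^{2/3}$ term.
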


In most ``reasonable'' regimes ($k$ and $\ell$ not overwhelmingly large) the minimum in the above lower bound will be achieved by $\frac{kn^{2/3}}{(\ell(k+\ell))^{1/3}}$.  Note that this almost matches the upper bound of Theorem~\ref{thm:Lsmoothing}: it is off by just a $(k + \ell)^{1/3}$ factor.  In addition, we also note that when $\ell$ is large the upper bound \emph{cannot} be tight, while our lower bound can be.  To see this, consider the case where $\ell = n$ and $k = o(n)$.  For these parameters, essentially every node is the endpoint of an edge added by smoothing in every round, and for each token the probability of broadcasting it is at least $1/k$, and hence for every token the number of nodes who know it will double in at most $O(k)$ rounds.  Thus random broadcast  will complete after only $O(k \log n)$ rounds.  For this case, where $\ell = n$ and $k = o(n)$, our lower bound from Theorem~\ref{thm:lower} correctly reduces to $\tilde \Omega(k)$, while the upper bound from Theorem~\ref{thm:Lsmoothing} remains at $\tilde O(k n^{1/3})$.
This hints that the modest gap between our upper and lower bounds might ultimately be resolved to be closer to the latter result.


\subsubsection{Proof of Theorem~\ref{thm:lower}}

Our lower bound instance will be the \emph{dynamic star}.  The vertices are $v_0, \dots, v_{n-1}$, and at time $i \in \{1,2,\dots, n\}$ the graph will be a star with $v_i$ at the center.  Initially node $v_0$ knows all of the tokens, while nodes $v_1, \dots, v_{n-1}$ knows all of the tokens \emph{except} for token $1$.  So random broadcast is complete once all nodes know token $1$.  Note that this graph is not defined for more than $n$ rounds, but the lower bound that we are trying to prove is at most $n$ due to the second term in the min, so we will not need to consider rounds past $n$.

Let $t = \min\left( \frac{kn^{2/3}}{(\ell(k+\ell))^{1/3}}, \frac{kn}{k+\ell}\right) / (2000\log n)$ (we have not optimized the constant or log factors).  We will argue that with constant probability, random broadcast has not completed by time $t$.

Let $A = \{v_i : 1 \leq i \leq t\}$.  For $v_i \in A$, we say that $v_i$ is a \emph{good} node if, conditioned on $v_i$ knowing token $1$ in round $i$, $v_i$ will broadcast token $1$ in round $i$ (the round where $v_i$ is the center).  Let $B$ denote the set of good nodes.   Let $C_i$ denote the set of nodes who know token $1$ at the beginning of round $i$.  Without smoothing we would have that $C_i \subseteq \{v_0, v_1, \dots, v_{i-1}\}$, but with smoothing this is not necessarily true.  We begin by analyzing $C_i$ under a condition on the good nodes.

\begin{lemma} \label{lem:Ci}
Suppose that for every $i \leq t$, either $v_i$ is not a good node or $v_i$ does not know token $1$ at the beginning of round $i$.  Then $|C_i| \leq 100 i \left(\frac{k+\ell}{k}\right) \log n$ for all $i \leq t$ with high probability.
\end{lemma}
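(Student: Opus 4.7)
The plan is to identify exactly what spreading the hypothesis forbids, translate this into a per-round bound on the expected growth of $|C_j|$, and then turn the resulting expectation gap into a high-probability statement via a stopping-time concentration argument.

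First, I would observe that under the hypothesis, for every $j \leq t$ the center $v_j$ never broadcasts token~$1$ during round $j$ (either because it does not know token~$1$ at the beginning of the round, or because its random permutation does not place token~$1$ first among the tokens it knows). Consequently, only two mechanisms can cause $|C_{j+1}| > |C_j|$: (i) a leaf already holding token~$1$ broadcasts it to the current center $v_j$, adding at most one new node to $C$; and (ii) one of the $\ell$ smoothed edges in round $j$ happens to connect a node $u \in C_j$ to some $w \notin C_j$ while $u$ selects token~$1$ as its broadcast. Writing $\Delta_j = |C_{j+1}| - |C_j|$ and letting $S_j \leq \ell$ count the useful smoothed edges, this gives $\Delta_j \leq 1 + S_j$. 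Because each smoothed edge is a uniformly random pair of nodes, the probability it crosses the cut $(C_j, V \setminus C_j)$ is at most $2|C_j|/n$, and conditional on crossing, its $C_j$-endpoint broadcasts token~$1$ with probability at most $1/k$ (it knows at most $k$ tokens), so $E[S_j \mid C_j] \leq 2\ell|C_j|/(nk)$ and $E[\Delta_j \mid C_j] \leq 1 + 2\ell|C_j|/(nk)$.

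Next, I would introduce the stopping time $\tau = \min\{j : |C_j| > 100\,j\,(k+\ell)\log n / k\}$, taking $\tau = t+1$ if no such $j$ exists, and aim to show $\Pr[\tau \leq t] \leq n^{-c}$. For any $j < \tau$, combining the inductive bound on $|C_j|$ with $j \leq t \leq kn/(2000(k+\ell)\log n)$ yields
\[
E[\Delta_j \mid C_j,\, j < \tau] \;\leq\; 1 + \frac{200\,\ell\,j\,(k+\ell)\log n}{n k^2} \;\leq\; 1 + \frac{\ell}{10k} \;\leq\; \frac{k+\ell}{k}.
\]
Summing, the expected value of the stopped sum $\sum_{j < i \wedge \tau} \Delta_j$ is at most $i(k+\ell)/k$, a factor of $\Theta(\log n)$ below the target threshold $100\,i\,(k+\ell)\log n / k$. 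Since each $\Delta_j$ decomposes as a sum of $1 + \ell$ Bernoulli indicators (one for the center contribution, one per smoothed edge) with conditional variance at most $(1+\ell)\,E[\Delta_j \mid C_j]$, I would apply a Bernstein-/Freedman-style concentration bound to the martingale $\sum_{j < i \wedge \tau}(\Delta_j - E[\Delta_j \mid \cdot])$ to conclude that the stopped sum exceeds $100\,i\,(k+\ell)\log n / k$ with probability at most $n^{-c-1}$ for each fixed $i$, and then union-bound over $i \leq t \leq n$.

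The main obstacle is the concentration step: a single round's increment can be as large as $1+\ell$, which dwarfs its expectation when $\ell$ is large, so a naive Azuma--Hoeffding inequality is too weak. Leveraging the small conditional variance of $\Delta_j$ via a Bernstein/Freedman bound is essential, and the stopping-time device is crucial for keeping the inductive hypothesis in force while bounding the current increment, thereby avoiding circular reasoning.
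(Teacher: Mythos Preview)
Your overall plan is sound, but it is considerably more intricate than the paper's argument, and one step needs repair. The crucial difference is your bound $E[S_j\mid C_j]\leq 2\ell|C_j|/(nk)$: by keeping the factor $|C_j|/n$ you make the per-round expectation depend on the very quantity you are trying to control, which is what forces you into the stopping-time and Freedman machinery. The paper simply discards this factor, noting that even if \emph{every} smoothed edge had one endpoint in $C_j$, that endpoint broadcasts token~$1$ with probability at most $1/k$; hence $S_j$ is stochastically dominated by $\mathrm{Bin}(\ell,1/k)$ \emph{independently of the history}. A per-round Chernoff bound then yields $\Delta_j\leq O\bigl((1+\ell/k)\log n\bigr)$ with high probability, and a union bound over the $t\leq n$ rounds finishes the lemma in two lines---no stopping time, no martingale inequality.

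The gap in your version is the Freedman step as written. With round-level increments bounded by $1+\ell$ and the variance bound $(1+\ell)\,E[\Delta_j\mid C_j]$ you quote, the Freedman exponent comes out of order $i(k+\ell)\log n\big/\bigl(k(1+\ell)\bigr)$, which is \emph{not} $\Omega(\log n)$ for small $i$ when $k$ and $\ell$ are both large (try $k=\ell=\sqrt{n}$ and $i=2$: the deterministic bound $|C_2|\leq 2+\ell$ already exceeds the target, yet your exponent is $o(1)$). The remedy is to apply the martingale bound at the granularity of the $i(1+\ell)$ individual Bernoulli indicators, so that the maximum increment is $1$ and the predictable variance is at most $\sum_j E[\Delta_j\mid\cdot]\leq i(k+\ell)/k$; then the exponent becomes $\Omega\bigl(i(k+\ell)\log n/k\bigr)\geq\Omega(\log n)$ and the argument closes. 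But at that point you have essentially reconstructed the paper's direct Chernoff bound by a longer route.
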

\begin{proof}
By assumption, for all $i \leq t$ the center node of the star does not broadcast token $1$.  Hence in round $i$, the nodes who learn token $1$ consist of at most the center node and some other nodes who learn token $1$ via smoothed edges.  Even if \emph{all} of the $\ell$ smoothed edges had an endpoint in $C_i$, the expected number of them who transmit token $1$ is at most $\ell /k$.   Hence a Chernoff bound implies that $|C_i| \leq |C_{i-1}| + 100\log n \cdot \left(1 + \frac{\ell}{k}\right)$ with high probability.  This high probability allows us to do a union bound over the first $t$ rounds, implying that $|C_i| \leq 100 i \left(\frac{k+\ell}{k}\right) \log n$ with high probability. 
\end{proof}

We say that round $i$ has \emph{productive smoothing} if some node in $B$ learns token $1$ via an edge added by smoothing.  It is easy to see that if after $t$ rounds there has not been any round with productive smoothing, then the assumption in Lemma~\ref{lem:Ci} holds, and hence $|C_i| \leq 100 i \left(\frac{k+\ell}{k}\right) \log n$ for all $i \leq t$ with high probability.  Since $t <  \frac{nk}{100(k+\ell)\log n}$ this means that not all nodes know token $1$ at time $t$, and so random broadcast has not finished.  So we just need to argue that with constant probability, there have been no rounds with productive smoothing before round $t$.

To see this, first note that by the definition of random broadcast, each node in $A$ is good independently with probability $1/k$.  Hence the expected number of good nodes is $|A| / k = t/k$.  A standard Chernoff bound then implies that $|B| \leq (10 t / k) \log n$  with high probability, so from now on we will condition on this being true.

In order for round $i$ to have productive smoothing, at least one of the $\ell$ edges added by smoothing must have one endpoint in $C_i$, one endpoint in $B$, and the endpoint in $C_i$ must choose to broadcast token $1$.  The probability of this for a single random edge is $\frac{|C_i|}{n} \cdot \frac{|B|}{n} \cdot \frac{1}{k}$, and hence a union bound over all $\ell$ random edges added in rounded $i$ implies that the probability of productive smoothing in round $i$ is at most
\[
\frac{|C_i|}{n} \cdot \frac{|B|}{n} \cdot \frac{\ell}{k} \leq \frac{|C_t|}{n} \cdot \frac{10 t \log n}{kn} \cdot \frac{\ell}{k} = |C_t| \cdot \frac{10t\ell \log n}{(kn)^2}.
\]

If there has been no productive smoothing before round $i$, then Lemma~\ref{lem:Ci} implies that with high probability $|C_i| \leq 100 i \left(\frac{k+\ell}{k}\right) \log n$.  This is high enough probability for us to take a union bound and still have a high probability bound, so we will assume that \emph{if} there has been no productive smoothing before round $i$ then $|C_i| \leq 100 i \left(\frac{k+\ell}{k}\right) \log n$.  

Let $X_i$ be an indicator random variable for the event that round $i$ has productive smoothing.  Then
\begin{align*}
    \Pr[\exists i \in [t] : X_i = 1] &\leq \sum_{i=1}^t \Pr\left[X_i = 1 \mid \sum_{j=1}^{i-1} X_j = 0\right] 
    \leq \sum_{i=1}^t \left(100 t \left(\frac{k+\ell}{k}\right) \log n\right)\left(\frac{10t\ell \log n}{(kn)^2}\right) \\
    &=\frac{1000t^3 \ell \left(k+\ell\right)\log^2 n}{k^3 n^2}
\end{align*}
Since $t \leq \frac{k n^{2/3}}{(2000 \ell (k+\ell) \log^2 n)^{1/3}}$ then this probability is at most $1/2$, which (as discussed) implies Theorem~\ref{thm:lower}.

\section{Random Broadcast in Smoothed Static Networks}
\label{sec:static}

As previously argued,
 a minimum amount of smoothing (i.e., $\ell=1$)
 improves the performance of random broadcast in dynamic networks from $O(kn)$ to $\tilde{O}(kn^{2/3})$.
 To better understand how smoothing supports information spreading,
 a natural follow-up question is to investigate its impact on random broadcast in {\em static} networks.
 If smoothed analysis provides the same bounds for both the dynamic and static settings,
 this would imply that smoothing essentially bypasses the difficulties induced by changing graph edges.
 Here we show this is not the case. 
 In more detail,
we prove that in static networks, $1$-smoothing improves the complexity of random broadcast
down to $\tilde{\Theta}(k\sqrt{n})$ rounds, beating what we can guarantee in the dynamic setting.
This establishes a gap between static and dynamic networks with respect to random broadcast,
confirming the intuition that network dynamism introduces unique difficulties for information dissemination
that smoothing alone cannot fully overcome.

\subsection{Upper Bound}
We begin by upper bounding the performance of random broadcast when solving $k$-message broadcast
in a static graph of size $n$.
We prove that in this setting random broadcast  solves the problem in $O(k\sqrt{n}\log^2{n})$ rounds,
with high probability.
Critical to our analysis is the following graph decomposition result:

\begin{lemma}
Fix a connected static graph $G=(V,E)$ of size $n$.
There exists a partition of $V$ into components $C_1,C_2,\ldots,C_x$,
such that for each $i$, $1 \leq i \leq x$: (1) $|C_i| \geq \sqrt{n}$; (2) the subgraph of $G$ induced by $C_i$ is connected and has a diameter at most $6\sqrt{n}$.
\label{lem:decomp}
\end{lemma}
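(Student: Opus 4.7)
The plan is to carry out the decomposition by pruning a spanning tree of $G$. Fix any spanning tree $T$ of $G$ and root it at an arbitrary vertex. Because any connected subtree of $T$ is also connected in $G$, and its diameter in $G$ is at most its diameter in $T$, it suffices to partition $V$ into subtrees of $T$ satisfying the required size and diameter bounds.

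I would then run a post-order sweep that greedily peels off clusters from the bottom up. Maintain, for each vertex $v$ reached, a \emph{pending subtree} at $v$ consisting of the descendants of $v$ (including $v$) that have not yet been assigned to any cluster. When the sweep visits $v$, let $s(v)$ denote the size of its pending subtree. If $s(v) \geq \sqrt{n}$, declare the pending subtree a cluster and mark those vertices assigned; otherwise leave them pending so that they are inherited by $v$'s parent. Each cluster produced this way has size at least $\sqrt{n}$ by construction.

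For the diameter bound, the key observation is that when a cut happens at $v$, each child $c$ of $v$ either contributed nothing (its subtree was carved off at or below $c$ earlier) or contributed a pending subtree of size strictly less than $\sqrt{n}$, since otherwise the cut rule would already have fired at $c$. A subtree on fewer than $\sqrt{n}$ vertices has depth less than $\sqrt{n}$, so every vertex in the new cluster lies within $T$-distance at most $\sqrt{n}$ of $v$, giving cluster diameter at most $2\sqrt{n}$.

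The last piece is the pending subtree remaining at the root, whose size may be strictly less than $\sqrt{n}$. In that case I would merge this leftover into any previously-cut cluster $C$ that is joined to it by a tree edge; such a $C$ exists whenever the leftover is a proper subset of $V$, since $T$ is connected. The leftover has size below $\sqrt{n}$ and hence diameter below $\sqrt{n}$, while $C$ has diameter at most $2\sqrt{n}$, so the merged cluster has size at least $\sqrt{n}$ and diameter at most $\sqrt{n} + 1 + 2\sqrt{n} < 6\sqrt{n}$. The main subtlety I expect is exactly the bookkeeping that converts the size-based cut rule into a depth-and-hence-diameter bound, using that a tree on $k$ vertices has depth at most $k - 1$; everything else is a straightforward case analysis.
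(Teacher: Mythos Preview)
Your argument is correct, and it reaches the same conclusion by a genuinely different route than the paper. The paper works directly in $G$: it repeatedly runs BFS from an arbitrary unassigned vertex, stopping once $\sqrt{n}$ vertices are collected (a ``red'' component) or the search is trapped (a ``blue'' component), proves every blue component borders some red one, and then merges each blue into an adjacent red. The $6\sqrt{n}$ bound arises because a path between two vertices of a final component may traverse a blue piece, cross the red core, and exit through another blue piece, each of diameter at most $2\sqrt{n}$.

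Your spanning-tree approach is cleaner in one respect: bottom-up carving produces at most \emph{one} leftover piece (the pending subtree at the root), so only one merge is ever needed, and your typical cluster already has diameter at most $2\sqrt{n}$ rather than $6\sqrt{n}$. The paper's BFS approach, by contrast, works directly with graph distances and does not commit to a spanning tree; this makes the height bound on each BFS piece immediate but forces the more elaborate red/blue merging to absorb possibly many small leftovers. Either construction is standard; yours is the tree-partition idiom, theirs is the iterative-BFS idiom, and both comfortably deliver the stated guarantees.
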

\begin{proof}
We prove the lemma constructively.
Given a static connected graph $G=(V,E)$ of size $n$, we describe and analyze a two-stage iterative procedure that
constructs components that satisfy the desired properties.

During the first stage, we partition $V$ into {\em preliminary} components, $\hat C_1, \hat C_2, \ldots$,
and provide each $\hat C_i$ a color $c_i$ from $\{$red, blue$\}$.
To do so, we begin by first labelling each node in $V$ as free.
We then proceed in construction phases, labelled $1,2,\ldots$, until no free nodes remain.
In more detail, for each phase $i$:

\begin{enumerate}
    \item Select an arbitrary free node $u$.
    \item In the subgraph of $G$
    induced by nodes that remain free, conduct a breadth-first search, starting from $u$,
    that terminates when either: (1) the search has encountered at least $\sqrt{n}$ free nodes;
    or (2) the search gets stuck with no further free nodes to explore.
    
    \item Define $\hat C_i$ to contain every node reached in the search.
    \item If the search from step 2 ended due to criteria (1), set $c_i =$ red; else
     if the search ended due to criteria (2), set $c_i = blue$.
    
\end{enumerate}

By definition, the resulting preliminary components partition the nodes in $V$.
We will use these preliminary components to construct the final components that satisfy the lemma statement.
To do so, we first note that each blue preliminary component must neighbor at least one red preliminary component.
To see why, assume for contradiction that some blue component $\hat C_i$ neighbors only other blue components.
Fix one such neighboring component $\hat C_j$.
Because they are neighboring, we can fix some $u$ and $v$ such that $u\in \hat C_i$, $v\in \hat C_j$, and
$\{u,v\}$ is in $G$.
Assume without loss of generality that $i<j$.
When the construction of $\hat C_i$ terminates, $v$ is still free, as it is included in the construction of a later
component.
By assumption, $c_i =$ blue, meaning that the construction of $\hat C_i$ terminated because it could find no further free nodes to explore.
We just established, however, that $v$ was still free and connected to $\hat C_i$ during the round when it terminates---a contradiction.

Given this observation, we can iterate through the blue preliminary components, 
combining the nodes in each with a neighboring red preliminary component. 
Let $C_1,C_2,\ldots,C_x$, be the components that result after we finish these merges.
Since each $C_i$ begins with a red preliminary component, we get that $|C_i| \geq \sqrt{n}$ as required.
We next turn our attention to the diameters of these components.
We first note every preliminary component has a diameter bounded by $2\sqrt{n}$,
as in both cases the breadth-first search tree defining the component has a height bounded by $\sqrt{n}$.
Next, fix some component $C_i$, and two unique nodes $u,v\in C_i$.
If they are in the preliminary component, then they are within $2\sqrt{n}$ hops.
If they are are in two different preliminary components, then they are at most $6\sqrt{n}$ hops
from each other, as $2\sqrt{n}$ hops gets you to the preliminary red component at the core of $C_i$,
an additional $2\sqrt{n}$ hops gets you across the red core to any other neighboring preliminary component,
and a final $2\sqrt{n}$ hops gets you to any node in this new preliminary component.
\end{proof}

We are now ready to prove our upper bound.
The key intuition in the following argument is that we can we analyze the spread of a given
target token within the context of the components provided by Lemma~\ref{lem:decomp}.
We will show, roughly speaking, that when the target token is first spreading,
if ${\cal A}$ is the set of components that known the target, 
it is likely that within $\tilde{O}(k\sqrt{n})$ rounds,  each component
in ${\cal A}$ will succeed in seeding the target over a smoothed edge to a unique
component not in ${\cal A}$, effectively {\em doubling} the number of components
that have learned the token. A logarithmic number of such doublings are sufficient
to spread the target to at least half the network, 
at which point the analysis shifts to the perspective of the remaining components,
and argues that within an additional $\tilde{O}(k\sqrt{n})$ rounds, 
each is likely to connect to an already informed component by a smoothed edge
and receive the token.
Care is needed in the formal argument to deal with both uneven-sized components
and dependencies between the smoothed edge behavior in different components during
the same spreading interval.

\begin{theorem}
Fix some connected static graph $G$ of size $n$.
Fix any rumor set size $k \geq 1$.
With high probability, random broadcast solves $k$-message broadcast in $O(k\sqrt{n}\log^2{n})$ rounds in $G$ with $1$-smoothing.
\label{thm:static}
\end{theorem}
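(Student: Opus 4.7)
The plan is to apply Lemma~\ref{lem:decomp} to decompose $G$ into components $C_1,\ldots,C_x$ with $x \leq \sqrt{n}$, each of size at least $\sqrt{n}$ and diameter at most $6\sqrt{n}$. Fix a target token $t$; a final union bound over the $k$ tokens yields the theorem. I will partition time into $\Theta(\log n)$ iterations of length $\tau = \Theta(k\sqrt{n}\log n)$ rounds each, for a total of $O(k\sqrt{n}\log^2 n)$ rounds. A standard analysis---each hop along a shortest path within a component is crossed in $O(k\log n)$ rounds with high probability (since the holder broadcasts $t$ with probability at least $1/k$ per round) and the diameter is at most $6\sqrt{n}$---shows that within a single iteration, any component containing a node that knew $t$ at the start has every node knowing $t$ by the end. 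Let $\mathcal{A}_i$ be the set of such components at the start of iteration $i$, with $A_i = \bigcup_{C \in \mathcal{A}_i} C$ and $B_i = V \setminus A_i$; the goal reduces to showing $A_i = V$ after $O(\log n)$ iterations.

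I analyze two phases. The \emph{finishing phase} handles $|A_i| \geq n/2$: for each remaining component $C_j \notin \mathcal{A}_i$ the per-round probability that the smoothed edge carries $t$ from $A_i$ into $C_j$ is at least $\tfrac{|A_i||C_j|}{\binom{n}{2}k} \geq \tfrac{1}{k\sqrt{n}}$, so over $\tau$ rounds $C_j$ is seeded with probability at least $1-n^{-c}$, and a union bound over the at most $\sqrt{n}$ remaining components finishes the spread in a single iteration. The \emph{doubling phase} handles $|A_i| \leq n/2$: I aim to show $|A_{i+1}| \geq \min(2|A_i|, n/2)$ with high probability. The per-round probability that any new component is seeded is at least $\tfrac{2|A_i||B_i|}{n^2 k} \geq \tfrac{|A_i|}{nk}$ (using $|B_i| \geq n/2$), so the number $N$ of successful seedings in an iteration has expectation $\Omega(|A_i|\log n/\sqrt{n})$ and concentrates to at least half this mean by a Chernoff bound on the independent rounds, provided $|A_i| \geq \sqrt{n}$---which holds after the first within-component spread.

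The main technical obstacle is converting $N$ seedings into at least $|A_i|$ new nodes despite uneven component sizes. Conditional on a successful seeding, the receiver is uniform in $B_i$, so component $C_j \notin \mathcal{A}_i$ is hit with probability $w_j = |C_j|/|B_i|$. My plan uses a case split. If some component $C^\star \notin \mathcal{A}_i$ has $|C^\star| \geq |B_i|/2$, then $C^\star$ is hit with probability at least $1 - 2^{-N} \geq 1 - n^{-\Omega(1)}$ (since $N = \Omega(\log n)$), and alone contributes $|C^\star| \geq n/4$ new nodes, which exceeds $|A_i|$ when $|A_i| \leq n/4$ and otherwise pushes $|A_{i+1}|$ past $n/2$. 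Otherwise $\max_j w_j < 1/2$, and I plan to show---via a balls-into-bins analysis using negative association of the hit indicators $Y_j$ and Azuma's inequality applied to the sequential reveal of receivers---that $\Omega(N)$ distinct components are hit with high probability; since each contributes at least $\sqrt{n}$ new nodes (the minimum component size), the total is $\Omega(N\sqrt{n}) = \Omega(|A_i|\log n) \geq 2|A_i|$. Combining the initial within-component spread, the $O(\log n)$ doubling iterations, and the single finishing iteration yields $O(k\sqrt{n}\log^2 n)$ total rounds; a final union bound over the $k$ tokens (and the $n$ target nodes) extends the guarantee to $k$-message broadcast after absorbing the polynomial loss into the constant in the high-probability bounds.
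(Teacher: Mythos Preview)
Your high-level plan matches the paper's: apply Lemma~\ref{lem:decomp}, split time into $O(\log n)$ intervals of length $\Theta(k\sqrt{n}\log n)$, argue a seeded component completes within one interval, and run a doubling phase followed by a finishing phase. The finishing phase and the union bounds are fine. There are, however, two gaps in the doubling phase.

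\textbf{Timing of the seeding bound.} You compute the per-round seeding probability as $\frac{2|A_i||B_i|}{n^2k}$, which presumes every node of $A_i$ already holds $t$. But at the start of iteration $i$ only at least one node per component of $\mathcal{A}_i$ knows $t$; full knowledge of $A_i$ is only reached at the \emph{end} of the iteration, so your expectation for $N$ is inflated. The paper handles this by splitting each interval into three sub-phases (complete $\mathcal{A}_r$; then seed; then complete the newly seeded components), so that during the seeding sub-phase all of $A_i$ genuinely holds $t$. This is an easy fix, but as written the calculation is not justified.

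\textbf{The case-2 balls-into-bins claim is false.} In the ``otherwise $\max_j w_j<1/2$'' branch you assert that $\Omega(N)$ \emph{distinct} components are hit. This cannot hold in general: take $|A_i|=\sqrt{n}$ (so $N=\Theta(\log n)$) and let $B_i$ consist of exactly three components of equal size $|B_i|/3$; then $\max_j w_j=1/3<1/2$, yet at most three distinct components can ever be hit, not $\Omega(\log n)$. The conclusion you want (at least $|A_i|$ new nodes) still holds in this example, but not for the reason you give, and no amount of negative association or Azuma will manufacture $\Omega(N)$ distinct hits from fewer than $N$ bins. The paper avoids this pitfall by working from the \emph{informed} side: it repartitions the nodes of $A_i$ into $m=\Theta(|A_i|/\sqrt{n})$ groups $S_1,\dots,S_m$ each of size $\Theta(\sqrt{n})$, and then argues sequentially that each $S_j$ achieves, within the seeding interval, a productive smoothed edge into a component of $B_i$ \emph{not already seeded by $S_1,\dots,S_{j-1}$}. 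As long as the newly seeded components occupy fewer than $n/4$ nodes, $B_i$ minus them still has $\Theta(n)$ nodes, so the per-round success probability for $S_j$ stays $\Omega(1/(k\sqrt{n}))$; and once the seeded components exceed $n_r/2$ nodes you are done anyway. This yields $m$ distinct newly seeded components, each of size at least $\sqrt{n}$, hence at least $m\sqrt{n}=\Theta(|A_i|)$ new nodes---the doubling you need---without any delicate distinct-bins argument.
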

\begin{proof}
Fix a graph $G= (V,E)$ of size $n$, and a rumor set of size $k$, as specified by the theorem statement.
Fix an arbitrary {\em target token} from the rumor set.  
We argue that this target token will spread to full network with the stated complexity.

To do so, we first apply Lemma~\ref{lem:decomp} to partition $G$ into components $C_1,C_2,\ldots,C_x$ with
the specified properties.
Moving forward, in a given round,
we call a component {\em seeded} if at least one node in the component
knows the target token, and call it {\em completed} if every node
knows the target.
It is straightforward to establish that once a component is seeded,
it will be completed, with high probability, in an additional $O(k\sqrt{n}\log{n})$
rounds. One approach to making this argument is to fix a breadth-first search tree in the component rooted at a node that knows the target token.
This root will broadcast the target in each round with probability at least $1/k$. Because each broadcast decision is independent,
we can apply a Chernoff bound to establish that in $O(k\log{n})$ rounds,  the root will broadcast the target at least once, with high probability.

We can then repeat this analysis to show that within an additional
$O(k\log{n})$ rounds, every node at level $1$ in the tree will have sent the target token, informing every node at level $2$, applying union bounds to ensure that every node at level $1$ succeeds with sufficient probability. An additional $O(k\log{n})$ rounds moves the token to level $3$, and so on. 
By the guarantees of Lemma~\ref{lem:decomp}, the tree has $O(\sqrt{n})$ levels, meaning that $O(k\sqrt{n}\log{n})$ rounds is sufficient to spread a target token through the whole component
with high probability.

Returning to our main argument, fix a round $r$.
Let ${\cal A}_r$ be the subset of components 
that are seeded at the beginning of
round $r$.
Let $n_r$ be the number of nodes in components in ${\cal A}_r$.
We first consider the case where $n_r < n/2$.
To do so, let ${\cal B}_r$ be all the components not in ${\cal A}_r$.
Our goal is to show that in an additional $O(k\sqrt{n}\log{n})$ rounds,
the target token will be delivered over smoothed edges
to a collection of components in ${\cal B}_r$,
where they will then spread to a total number of new nodes that is
at least a constant fraction of $n_r$---increasing the number
of nodes that know the target token by a constant factor.

To make this argument, 
we divide this period of $O(k\sqrt{n}\log{n})$ rounds starting at round $r$
into three parts.
During the first part, 
we wait for the target token to spread sufficiently
to complete every component in ${\cal A}_r$.
As argued above, with high probability, this takes $O(k\sqrt{n}\log{n})$ rounds.

Next we consider a stretch of an additional
$O(k\sqrt{n}\log{n})$  rounds that we call the {\em seeding interval}.
We will show that during this interval, smoothed edges
between components in ${\cal A}_r$ and ${\cal B}_r$,
will seed the target token into a collection of ${\cal B}_r$ components
whose collective size is a constant fraction of $n_r$.
The final $O(k\sqrt{n}\log{n})$ rounds will be dedicated
to allowing these seeded ${\cal B}_r$ components to complete.
(Of course, it is possible that in the time we spent completing components
in ${\cal A}_r$, the target token might have already made its way to components in ${\cal B}_r$ and started spreading, but this only speeds up our efforts.)

We are left then to study closely the behavior of the smoothed edges during
the smoothing interval of this period.
Our attempts to analyze smoothed edges from ${\cal A}_r$ to ${\cal B}_r$
during this  interval
will be complicated by the fact that the components in ${\cal A}_r$ can be of variable size. Though
Lemma~\ref{lem:decomp} guarantees that each contains at least $\sqrt{n}$ nodes, it is possible that some might be much larger than this lower limit.
It is useful for our purposes to temporarily reorganize these already informed nodes into more uniform-sized groups.
With this in mind, let ${\cal A'}_r$ be a partition of the nodes in components in ${\cal A}_r$ into {\em groups} that are all of size $\Theta(\sqrt{n})$.
For our purposes, it does not matter how this partition is defined. The nodes in each $S\in {\cal A'}_r$, for example, do not need to be connected. In the next step of our analysis, we will only concern ourselves with the probability that a node in a given group is selected as an endpoint of a smoothed edge.

Next, fix some arbitrary group $S_1\in {\cal A'}_r$ to consider first.
Let $\hat n_r$ be the number of nodes in ${\cal B}_r$.
Because we are still considering the high-level case where  $n_r < n/2$, we know $\hat n_r \geq n/2$.
We now analyze the rounds of the seeding interval in order,
stopping at the first round in which: (1) a smoothed edge connects
a node $u\in S_1$ to a node in a component in ${\cal B}_r$; and
(2) $u$ broadcasts the target token.
It is straightforward to see that in any given round,
a {\em productive} connection of this type occurs
with probability at least:

\[ \frac{|S_1|}{n}\cdot \frac{\hat n_r}{n} \cdot \frac{1}{k} \geq 
     \frac{1}{2k\sqrt{n}}.\]
     
With high probability, therefore,
we will find such a productive connection in a seeding
interval of length $O(k\sqrt{n}\log{n})$.
Where this argument gets more subtle is that we now 
want to consider the other groups in ${\cal A'}_r$,
and argue that they too will succeed in forming
a productive connection during this {\em same} seeding
interval.
This will require care to deal properly with dependencies.

The first thing we do is take the component in ${\cal B}_r$
at the other end of the productive connection from $S_1$,
and add it to a set ${\cal C}$  of successfully seeded
components.
Before proceeding in our analysis of this seeding interval, 
we make two checks to see if we are already done.
Let $n'$ be the number of nodes in components in ${\cal C}$ at this point.
If $n_r + n' \geq n/2$, then we can simply
wait an additional $O(k\sqrt{n}\log{n})$ rounds to complete
the component in ${\cal C}$, and be done with the high-level case 
we are considering in which less than half of the nodes know the target token.
Similarly, if $n' \geq n_r/2$,
then we can finish our analysis of this particular seeding interval as we
have accomplished our proximate goal of increasing the number
of nodes that know the target token by a constant factor.

If we fail both checks, we must then continue with analyzing our
same seeding interval in the hopes of seeding more tokens into ${\cal B}_r \setminus {\cal C}$.
Fix a new group $S_2\in {\cal A'}_r \setminus S_1$.
As before, consider rounds in the seeding interval one by one,
starting from the first round of the interval,
until
we arrive at a with a productive connection
from $S_2$
to a not yet seeded component in ${\cal B}_r \setminus {\cal C}$.
In each such round, we argue that the probability of a productive
connection is still in $\Omega(\frac{1}{k\sqrt{n}})$.
The main difference as compared to prior groups considered
is that the number of nodes that can receive a productive
smooth edge decreases as we add more components to ${\cal C}$.
By our above check, however, we know that the number
of nodes in ${\cal C}$ is less than $n_r/2 < n/4$.
Because we similarly assume that ${\cal B}_r$ has at least
$n/2$ nodes, then ${\cal B}_r \setminus {\cal C}$ must still
have at least $n/4$ total nodes.
The probability of selecting
an endpoint in ${\cal B}_r \setminus {\cal C}$ will therefore
always be at least $1/4$, reducing the original productive connection
probability calculated for $S_1$ by only a constant factor for later groups.

As previewed, however, we must also consider dependencies.
When considering a given round $r'$ in the seeding interval
when considering group $S_2$, there are two relevant possibilities:
(1) $r'$ was the productive connection from our analysis of $S_1$;
(2) $r'$ was not the productive connection for $S_1$.
The first case introduces a problematic dependency, as
being productive for one group prevents you from being productive
for another.
To deal with this dependency we simply {\em ignore} in our analysis
any round that was part of a productive connection for a previously
studied group.
The second case, by contrast, introduces a useful dependency:
knowing that $r'$ was not productive for $S_1$ only {\em increases}
the probability that it is productive for $S_2$.
A standard negative correlation argument tells us that when lower bounding
the probability of a productive connection with respect to $S_2$,
it is fine to treat each round in the second case as succeeding with an independent
probability in $\Omega(\frac{1}{k\sqrt{n}})$.

It follows that with high probability, $S_2$ will also have a productive connection in our seeding interval. At this point, we can repeat the above analysis.
Add the newly seeded component to ${\cal C}$. 
Check if we are done.
If not, select a new set $S_3\in {\cal A'}_r\setminus \{S_1, S_2\}$ 
and study the same interval again, round by round, 
ignoring now only the two rounds in which $S_1$ and $S_2$ succeeded in forming productive connections,
and so on, until we finally match the criteria for stopping our analysis.
(Notice that removing productive rounds from consideration
in the seeding interval is not a problem as there are at most $O(\sqrt{n})$ such productive rounds possible given that $|{\cal A'}_r| = O(\sqrt{n})$,
and our interval length can be made sufficiently long to still provide
us the needed high probability of success even with up to $O(\sqrt{n})$ rounds
omitted from consideration.)

Moving on with our argument,
recall that for our fixed round $r$, 
there are two different criteria that might terminate the above seeding analysis.
The first is that at least half the nodes in the network learn the target token, at which point we are ready to move on to the second half of our overall analysis, which we will discuss shortly.
The second termination criteria is that the number of nodes in components
in ${\cal B}_r$
that learn the target token is at least a constant factor of $n_r$.
In this case, we fix a new round $r'$ after the seeding interval in question is done, and after all components in ${\cal C}$ complete.
We then reapply our above analysis starting from $r'$, increasing the number
of nodes that know the target node by another constant factor.
We can repeat this at most $O(\log{n})$ times before at least half the nodes know the target token.

We now consider what happens
once we succeed in spreading the target token to at least half the nodes in the network. 
We can now redeploy pieces of our above spreading argument to show that target token will make it to all remaining nodes
in just one more additional spreading interval of length $O(k\sqrt{n}\log{n})$ rounds.

We first dispense with the sub-case in which less than $\sqrt{n}$ nodes do not the token; i.e., we are almost done. If this is true, each uninformed node $u$ is within $\sqrt{n}$ of at least one informed node, meaning a straightforward spreading analysis will deliver the token to each such $u$ with high probability
within $O(k\sqrt{n}\log{n})$ rounds---completing the rumor spreading.

We are left with the sub-case in which somewhere between $\sqrt{n}$ and $n/2$ nodes remain
that do not have the target token.
Let ${\cal A}$ be the set of components that contain at least one node that from this set
of nodes that do not know the target.
Some of these components are seeded (i.e., at least one node in them knows the target) and some are not (i.e., no node knows the token).

Let ${\cal A'}$ be the subset of ${\cal A}$ that contains only the non-seeded components.
Spend $O(k\sqrt{n}\log{n})$ to complete the seeded components from ${\cal A}$.
We now turn our attention exclusively to the components from ${\cal A'}$,
as these are the only components at this point which can possibly contain {\em any}
nodes that do not know the target token.
We know each such component is of size at least $\sqrt{n}$,
and that at least half the total nodes in the network are not in
these components.
We can therefore apply our above spreading analysis to the components in ${\cal A'}$,
which establishes that in a single additional spreading interval,
every component in ${\cal A'}$ will have a productive connection with a completed component, thus seeding it with the target token. We can then complete these components,
and therefore complete $k$-message broadcast,
in an additional $O(k\sqrt{n}\log{n})$ rounds.

To conclude the proof,
we note that all relevant growth and spreading arguments hold
with high probability, and that there are at most $\text{poly}(n)$ such
events that must succeed. This allows us to deploy union bounds
to prove that the entire problem is successful, with high probability,
after $O(\log{n})$ intervals of length $O(k\sqrt{n}\log{n})$,
yielding the claimed overall time complexity of $O(k\sqrt{n}\log^2{n})$ rounds.
\end{proof}

\subsection{Lower Bound}

We now prove that our analysis of random broadcast in static networks is tight (within logarithmic factors).
The argument formalized below is easy to summarize. Consider spreading a target token through a line in which every node
already knows $\Theta(k)$ other tokens.
It will require $\Theta(k\sqrt{n})$ rounds for the target token to directly spread through the first $\sqrt{n}$ nodes in the line
with reasonable probability. At the same time, this interval is not sufficiently long for smoothed edges to have a reasonable probability of helping to speed up this initial spread.
Formally:

\begin{theorem}
Fix a network size $n > 1$ and rumor set size $k>1$.
The expected time for random broadcast to solve $k$-message broadcast in networks of size $n$ with $1$-smoothing is in $\Omega(k\sqrt{n})$.
\label{thm:static:lower}
\end{theorem}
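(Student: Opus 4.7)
The plan is to exhibit a hard instance and show that random broadcast fails to complete within $T = \epsilon k\sqrt{n}$ rounds with probability at least $1/2$, which gives the desired $\Omega(k\sqrt{n})$ lower bound on expected completion time. The instance is the static path $v_1 v_2 \cdots v_n$ together with a designated target token $t^{\star}$ that only $v_1$ knows initially; every other node knows all of the remaining $k-1$ tokens. The non-target tokens reach $v_1$ through $v_2$ in $O(k \log k)$ rounds, so completion is asymptotically controlled by the time for $t^{\star}$ to spread from $v_1$ along the path. Note that every node that currently knows $t^{\star}$ in fact knows exactly $k$ tokens, and so broadcasts $t^{\star}$ in each round with probability exactly $1/k$.

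Let $S_r = \{u : t^{\star} \in T_u(r)\}$, set $N_r = |S_r|$, and let $s_r$ count the maximal runs of consecutive nodes of $S_r$ on the path (``chunks''). Write $M_r = E[N_r]$ and $C_r = E[s_r]$. I would derive the growth recursions
\begin{align*}
M_{r+1} &\leq M_r + \tfrac{2}{k} C_r + \tfrac{2}{kn} M_r, \\
C_{r+1} &\leq C_r + \tfrac{2}{kn} M_r,
\end{align*}
from the following per-round accounting: $N_r$ grows either through line spread (at most $2 s_r$ chunk endpoints in $S_r$, each broadcasting $t^{\star}$ with probability $1/k$, for expected contribution $\leq 2 s_r/k$) or through the single smoothed edge, which helps only if it crosses the cut $(S_r, V \setminus S_r)$ (probability $\leq 2 N_r/n$ under the uniform distribution on $\binom{V}{2}$) \emph{and} the $S_r$-endpoint chooses $t^{\star}$ (probability $1/k$); only this second source can also increase $s_r$, and by at most one per round. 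Under the inductive hypothesis $M_r \leq n/2$ for all $r \leq t$, the second recursion gives $C_r \leq 1 + r/k$, and substituting into the first yields $M_{r+1} \leq M_r + 3/k + 2r/k^2$, hence $M_t \leq 1 + 3t/k + t^2/k^2$. Taking $T = \epsilon k\sqrt{n}$ for a sufficiently small absolute constant $\epsilon$ makes $M_T \leq n/2$ for all large $n$, closing the induction.

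A reverse Markov inequality applied to the nonnegative variable $n - N_T \in \{0,1,\ldots,n-1\}$ with $E[n - N_T] \geq n/2$ then gives $\Pr[N_T \leq n-1] \geq 1/2$, so random broadcast has not completed by round $T$ with probability at least $1/2$ and its expected completion time is therefore at least $T/2 = \Omega(k\sqrt{n})$. The main obstacle is the coupled evolution of $N_r$ and $s_r$: smoothed edges both grow $S_r$ and can fragment it into more chunks, which in turn accelerates line spread and superficially threatens a runaway. What rescues the argument is that the per-round probability of a useful smoothed edge is $O(N_r/(kn))$, which stays small precisely because the inductive invariant keeps $N_r$ small; this self-regulating feedback ensures the dominant term $T^2/k^2 = \epsilon^2 n$ sits safely below $n/2$.
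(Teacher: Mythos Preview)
Your argument is correct and takes a genuinely different route from the paper's. The paper also uses the path with $t^{\star}$ known only at one endpoint, but rather than tracking the global pair $(N_r,s_r)$ it localizes to the prefix $\{v_1,\dots,v_{\sqrt{n}}\}$: (i) path-only spread through this prefix is a sum of $\sqrt{n}-1$ independent geometric$(1/k)$ waits with expectation $\Theta(k\sqrt{n})$; (ii) a round is declared \emph{useful} when the smoothed edge has an endpoint in the prefix that selects $t^{\star}$, an event of probability $O(1/(k\sqrt{n}))$; hence for $t=\alpha k\sqrt{n}$ with small $\alpha$, with constant probability no useful round occurs and path spread has not reached $v_{\sqrt{n}}$, which (inductively) confines $t^{\star}$ to the prefix. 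Your expectation-recursion approach is more quantitative---it yields explicit control of $E[N_r]$ at every time and would adapt more readily to $\ell>1$---while the paper's localization is shorter and sidesteps the coupled induction on $(M_r,C_r)$ entirely.

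One small inconsistency to clean up: you give $v_1$ only the single token $t^{\star}$ (you say ``the non-target tokens reach $v_1$ through $v_2$ in $O(k\log k)$ rounds''), yet you then assert that every node knowing $t^{\star}$ knows exactly $k$ tokens, which fails for $v_1$ at the outset. The painless fix---and what the paper does---is to start $v_1$ with all $k$ tokens; then the exact $1/k$ broadcast probability holds for every node in $S_r$ from round one, and the side remark about non-target tokens becomes unnecessary (for the lower bound you only need $N_T<n$, not full completion). Your argument actually survives the sloppier setup too, since $v_1$'s inflated probability only matters when $v_1$ is a smoothed-edge endpoint, contributing $O(T/n)=O(\epsilon k/\sqrt{n})$ extra to $M_T$, but it is cleaner to avoid the issue. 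A second cosmetic point: the smoothed-edge coefficient is $2/(k(n-1))$ rather than $2/(kn)$, which of course changes nothing asymptotically.
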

\begin{proof}
Fix any $n$ and $k$ as specified.
Assume that the tokens are labelled from $1$ to $k$, 
and that random broadcast makes its token selection
at each node by choosing a random value from $1$ to $k$,
and then broadcasting the token with the label closest to random value.
It is straightforward to show by a simulation relation
argument that a lower bound that holds for this stronger
algorithm still holds for the standard version in which the algorithm selects a token uniformly from those known by the node.
We consider this stronger version here as it simplifies the presentation of our lower bound argument.

Consider a graph $G$ in which $n$ nodes are connected in a line, arranged $u_1,u_2,\ldots,u_n$.
Fix a rumor set of $k$ tokens. Assign every node tokens $2$ through $k$.
Start token $1$ only at node $u_1$.
Let $Y$ be the random variable that describes the number
of number of rounds required for token $1$ to make it to all
nodes in the set $\{u_1,u_2,\ldots,u_{\sqrt{n}}\}$ using
only the random broadcast mechanism (i.e., ignoring smoothed edges).
We can define $Y=X_1 + X_2 + \ldots + X_{\sqrt{n}-1}$, where $X_i$ is
the number of rounds until $u_i$ first selects value $1$ after
token $1$ first arrives at $u_i$ via the random broadcast mechanism.
Formally, let $X_1$ be the first round in which $u_1$ selects value
$1$; and for $i>1$, let $X_i$ be the number of rounds until $u_i$ first selects
$i$ after round $X_{i-1}$.
By linearity of expectation, $E(Y) = \Theta(k\sqrt{n})$.

Next we consider smoothed edges.
We call a round $r$ {\em useful} if: (1) the smoothed
edge in $r$ includes at least one endpoint among the first $\sqrt{n}$
nodes; and (2) the endpoint(s) from among the first
$\sqrt{n}$ nodes randomly select value $1$.
The probability that a given round is useful is in $O(1/(k\sqrt{n}))$.

To obtain our bound,
let $t=\alpha k \sqrt{n}$,
where $\alpha > 0$ is a fraction that is sufficiently small to ensure:
(1) the expected number of useful rounds in the first $t$ rounds
is less than $1$; and (2) $t < E(Y)$.
It follows that with constant probability, during the first $t$ rounds:
(1) there are no useful smoothed edges (which could potentially help token $1$ spread faster);
and (2) token $1$ does not make it to node $u_{\sqrt{n}}$ via the 
random broadcast mechanism.
If both of these events are true, then 
 $k$-message broadcast is not solved in $t$ rounds.
It follows that when calculating the expected complexity
of $k$-message broadcast in this network for this initial
token assignment, there is constant probability mass
dedicated to complexities of magnitude at least $t$,
meaning that the expected overall complexity must be in $\Omega(t) = \Omega(k\sqrt{n})$, as claimed.
\end{proof}










\section{Random Broadcast in Well-Mixed Networks}
\label{sec:wellmixed}

Dutta et al.~\cite{dutta2013complexity} introduced the notion a {\em well-mixed} network in the context
of the $k$-message broadcast problem. 
A network satisfies this property if for each token $t$ and node $u$,
with some independent constant probability: $u$ starts with $t$.
The main $\tilde{\Omega}(nk)$ lower bound from this paper also holds for well-mixed networks.
To circumvent this bound,
the authors assume a stronger communication model that allows independent interactive
communication on each  edge, and provide a randomized
algorithm in this setting that solves $k$-message broadcast in
$\tilde{O}(n + k)$ rounds.

Here we deploy our predecessor path constructions
from Section~\ref{sec:pred}, originally designed to
support our smoothed analysis, to now  explore an alternative method
to circumvent the $\tilde{\Omega}(nk)$ lower bound without smoothing: weakening the adversary.
The lower bound in~\cite{dutta2013complexity} assumes a {\em strongly
adaptive} adversary that knows all the nodes' random bits, 
allowing it to generate a network graph in each round based on
the specific tokens nodes will broadcast during that round.
Another natural option is the {\em oblivious} adversary assumed
in this paper, in which the adversary generates the graph 
without advance knowledge of the nodes' random bits.
Indeed, the question of whether an oblivious adversary enabled better
bounds was identified as important future work in~\cite{dutta2013complexity}.

Using predecessor paths, we prove that with an oblivious adversary,
our simple random broadcast strategy solves $k$-message broadcast
in well-mixed networks in $\tilde{O}(k)$ rounds.
This improves on the $\tilde{\Omega}(n+k)$ bound from~\cite{dutta2013complexity},
as it eliminates the $n$ factor.\footnote{It may seem at first odd that the bound does not include $n$, as even in a static
line it takes at least $n$ rounds for a single token to make it to all nodes. 
In the well-mixed model, however,
that token would be expected to show up every constant number of hops on average, and therefore never be too far from any particular destination. Our analysis uses predecessor paths to
generalize this observation to dynamic networks
and show that each token starts
on a node that is
not too far in space and time for each given destination.} 
It also matches the trivial $\Omega(k)$ lower bound that holds
for {\em all} $k$-message
broadcast algorithms in a well-mixed network.\footnote{Fix a static line.
Let $u$ be one of the endpoints. With constant probability $u$ is missing
$\Omega(k)$ tokens in its initial set. Because $u$ can receive at most
one token per round in a line topology, it requires at least
$\Omega(k)$ rounds
for it to learn these missing tokens one by one.}
Indeed, our result is actually more general, showing $\tilde{O}(k/p)$
rounds are needed, when $p$ is the token probability; i.e., the definition
of well-mixed in~\cite{dutta2013complexity} assumes $p=\Theta(1)$.

This result opens a clear separation between strongly adaptive and oblivious
adversaries in the context of well-mixed networks, and hints such a separation
might exist for arbitrary token distributions as well.
It also provide further evidence that the simple random broadcast strategy
is a highly effective
strategy for information dissemination in these settings.

Formally, we consider the following generalized version of the property concerning initial token distributions introduced in~\cite{dutta2013complexity}:  

\begin{definition}
Fix a probability $p>0$.
We say an initial token distribution is {\em $p$-mixed} if for each
node $u\in V$ and token $t\in T$: $u$'s initial token set includes $t$
with independent probability $p$.\footnote{A slight technicality of this definition is that as stated
it allows for certain tokens to not show up at all, making termination impossible. A simple
fix is to assume that all $k$ tokens are distributed arbitrarily to at least one node, then
the random process is deployed to further introduce more tokens into the system.}
\end{definition}

Given this definition, we prove our main upper bound result:

\begin{theorem}
Fix a probability $p>0$. 
Random broadcast solves $k$-message broadcast in $O((k/p)\log{n})$ rounds, with high probability,
when run in a network with a $p$-mixed token distribution.
\label{thm:pmixed}
\end{theorem}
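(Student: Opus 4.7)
The plan is to invoke the predecessor path machinery from Section~\ref{sec:pred} with an interval just long enough that the resulting path contains, with high probability, at least one node that already holds the target token under the $p$-mixed initial distribution.

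First, fix a destination node $u \in V$ and target token $t \in T$, and set $r' = \lceil c(k/p)\ln n \rceil$ for a sufficiently large constant $c$. Construct the predecessor path $P_{u,t}(1,r')$ using the Path-Construction procedure from Section~\ref{sec:pred}. Taking $z = r' - 1 \geq 8xk\ln n$ (which holds once $c$ is chosen large in terms of $x$, since $p \leq 1$), Theorem~\ref{thm:pathlength} guarantees that with probability at least $1 - n^{-x}$ the path has length $h > z/(2k) \geq (c/3)\ln(n)/p$. Crucially, this construction depends only on the adversary's dynamic graph ${\mathcal G}$ and on the random bit assignment ${\mathcal B}$ that governs which token each node chooses to broadcast in each round; it does \emph{not} reference the random initial token placement.

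Next, I would condition on any fixed realization of ${\mathcal G}$ and ${\mathcal B}$ for which the path $P_{u,t}(1,r') = (u_1,r_1),\ldots,(u_h,r_h)$ has length at least $(c/3)\ln(n)/p$, and exploit the independence between the path's identity and the $p$-mixed starting distribution. Since the $h$ distinct nodes $u_1, \ldots, u_h$ each contain token $t$ initially with independent probability $p$, the probability that none of them starts with $t$ is at most $(1-p)^h \leq e^{-ph} \leq e^{-(c/3)\ln n} = n^{-c/3}$. By the defining property of a predecessor path (noted as the corollary immediately following the definition), if any $u_i$ knows $t$ at round $r_i$ then $u$ knows $t$ by round $r'$, so in this event $t$ has reached $u$ within $O((k/p)\log n)$ rounds.

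Finally, I would union-bound the failure probability over all $nk \leq n^2$ pairs $(u,t)$: the bad event is either that the path is too short or that none of its nodes starts with $t$, each bounded by $n^{-x}$ and $n^{-c/3}$ respectively. Choosing $c$ and $x$ large enough drives the total failure probability below $n^{-\Omega(1)}$, giving the claimed $O((k/p)\log n)$ bound with high probability. The only delicate point in this argument is cleanly separating the two sources of randomness: the path-construction randomness (${\mathcal B}$ together with ${\mathcal G}$) must be independent of the initial $p$-mixed assignment, which is exactly how the model is set up, so applying $(1-p)^h$ to the $h$ path nodes is justified.
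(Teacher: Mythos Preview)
Your proposal is correct and follows essentially the same approach as the paper: build a predecessor path of length $\Omega((1/p)\ln n)$ via Theorem~\ref{thm:pathlength}, then argue that at least one of its (distinct) nodes starts with the target token under the $p$-mixed distribution, and finish with a union bound over all $(u,t)$ pairs. The only minor difference is that you bound the probability that no path node starts with $t$ directly as $(1-p)^h \leq e^{-ph}$, whereas the paper routes this through a Chernoff bound; your version is slightly cleaner, and your explicit observation that the path construction depends only on ${\mathcal G}$ and ${\mathcal B}$ (not on the initial token placement) is a point the paper leaves implicit.
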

\begin{proof}
Fix any $u$ and $t$.
Our first step is to choose a large enough $r$ such that $|P_{u,t}(1,r')| \geq \alpha(1/p)\ln{n}$,
for a constant $\alpha \geq 1$ we will fix later.
We want this to hold with probability at least $1-n^{-4}$.
To do so, we can apply Theorem~\ref{thm:pathlength} with $x=4$, $r=1$,
$r'=32k\alpha(1/p)\ln{n}$.
(Notice, for our parameters,
the corresponding $z=r'-r$ value is lower bounded by $8xk\ln{n} = 32k\ln{n}$, as required by the theorem statement.)
This tells us that with probability at least $1-n^{-4}$,
$|P_{u,t}(1,r')| > \frac{32k\alpha(1/p)\ln{n}}{2k} \geq \alpha(1/p)\ln{n}$, as needed.
A union bound over the $nk \leq n^2$ possible combinations of processes and tokens,
tells us that the probability that any predecessor path is less than this length is less than $n^{-2}$.

Fix some such path $P_{u,t}(1,r')$ that is sufficiently long;
i.e., $q = |P_{u,t}(1,r')| \geq \alpha(1/p)\ln{n}$.
We apply the definition of $p$-mixed to argue that it is likely that at least one node
in this path starts the execution with token $t$.
By definition, of a predecessor path, each node in $P_{u,t}(1,r')$ is unique.
By the definition of $p$-mixed, each $u_i$ starts the execution with token $t$
with independent probability $p$.
Given this independence, we can use the indicator random variable $X_i$ to describe
whether or not $u_i$ starts with token $t$,
and then apply our Chernoff Bound form from Theorem~\ref{thm:chernoff} to $X=\sum_{i=1}^{r'}X_i$,
to bound the probability that $X$ is far from its expectation $\mu=pq$.
Formally:

\[ \Pr[Y \leq \mu/2] \leq \exp(- \mu/8).\]

\noindent Given that $\mu = pq \geq \alpha\ln{n}$, then for $\alpha\geq 32$,
it follows:

\[ \Pr[Y = 0 ] < n^{-4}. \]

\noindent A union bound over the $nk<n^2$ node and token pairs tells us that if at all predecessor
paths are of length at least $q$, the the probability {\em any} path does not contain at least one
node starting with the path's target token is less than $n^{-2}$.
A union bound over the failure probabilities for these two events gives us the final result
that with high probability, for every $u\in V$ and $t\in T$,
at least one node in $P_{u,t}(1,r')$ starts with $t$.
As argued in our previous discussion of predecessor paths, if
any node on $P_{u,t}(1,r')$ starts with $t$ then $u$ receives $t$ by round $r'$.
Therefore, with high probability, random broadcast solves $k$-message broadcast
in a $p$-mixed network in $r' = O((k/p)\log(n))$ rounds, as claimed.
\end{proof}




\bibliography{smoothing}
\bibliographystyle{plainurl}
\end{document}